\newtheorem{theorem}{Theorem}[section]
\newtheorem{proposition}[theorem]{Proposition}
\newtheorem{lemma}[theorem]{Lemma}
\newtheorem{corollary}[theorem]{Corollary}
\newtheorem{remark}[theorem]{Remark}
\newtheorem{example}[theorem]{Example}
\newtheorem{assumption}[theorem]{Assumption}
\newcommand{\FF}{\mathcal{F}}
\newcommand{\f}{\mathbb{F}}
\newcommand{\R}{\mathbb{R}}
\newcommand{\PP}{\mathbb{P}}
\newcommand{\e}{\varepsilon}
\begin{document}

\title{Sticky processes, local and true martingales\thanks{The first author was supported by the
``Lend\"ulet'' Grant LP2015-6 of the Hungarian Academy of Sciences. Discussions with
Martin Keller-Ressel led to formulating the main results of the present paper, we sincerely
thank him. We also thank Eberhard Mayerhofer for spotting an error and two anonymous referees
for helpful reports which revealed, in particular, another problem in a previous version
of this paper.}}

\author{Mikl\'os R\'asonyi\thanks{MTA Alfr\'ed R\'enyi Institute of Mathematics, Re\'altanoda utca 13-15,
1053 Budapest, Hungary. E-mail: rasonyi@renyi.mta.hu.}
\and Hasanjan Sayit\thanks{Department of Mathematical Sciences, Durham University, South Road, Durham DH1 3LE, United Kingdom. E-mail: hasanjan.sayit@durham.ac.uk.}}

\date{\today}

\maketitle

\begin{abstract} 
We prove that for a so-called sticky process $S$ there exists an equivalent probability $Q$
and a $Q$-martingale $\tilde{S}$ that is arbitrarily close to $S$ in
$L^p(Q)$ norm. For  continuous $S$, $\tilde{S}$ can be chosen arbitrarily close to $S$ in  supremum
norm. In the case where $S$ is a local martingale 
we may choose $Q$ arbitrarily close to
the original probability in the total variation norm.
We provide examples to illustrate the power of our results and present an application in mathematical
finance.
\end{abstract}

\section{Introduction}

%In this note we show that stochastic processes that satisfy the natural condition of stickiness can be approximated arbitrarily closely by semi-martingales that admit equivalent martingale measures  under the $L^p$ norm. A process is sticky if it never certain to exit a small ball in a finite time horizon no matter how small the ball is.  So far many interesting classes of stochastic processes are shown to have the stickiness property. In proposition 3.1 of Guasoni \cite{sticky}, regular strong Markov processes are shown to have the stickiness property. This includes almost all L\'evy processes (the only L\'evy processes that are not sticky are the deterministic ones as will be discusses in an example later).

By their very definition, local martingales are
``almost'' martingales. Moreover, in discrete
time every local martingale is a martingale under
an equivalent change of measure and
the new measure can be chosen to be arbitrarily
close to the original one in the total variation norm, even on an infinite horizon,
see e.g. Theorem 2.2.2 in Kabanov and Safarian \cite{kabanov}.

In continuous time such a strong result does not hold. For example  the inverse of the three dimensional  Bessel
process is a local martingale and it is not a martingale under any
equivalent change of probability measure.
We may ask, however, whether there is a process
``near'' the given local martingale which becomes a martingale
under an equivalent probability.

It turns out that such a result holds provided that the given local martingale satisfies
the natural condition of \emph{stickiness}: for sticky local martingales a martingale (modulo a 
change of measure to some $Q\sim P$)  that stays in any small neighborhood of it under the $L^p(Q)$ norm  
can be found, and $Q$ can even be 
chosen to be as close as one wants to $P$ in total variation norm, see Corollary 
\ref{lp} below for this result.

A process is sticky if, starting from any stopping time on,  it is never certain to exit a small ball in 
a given time horizon no matter how small the ball is. This condition was first used in the paper Guasoni 
\cite{sticky} in the context of finance and according to the Proposition 3.1 of Guasoni \cite{sticky} all 
regular strong Markov processes are sticky. This includes, for example,  most L\'evy processes, see 
Section \ref{exa} for further details.  Other than this, stochastic processes with the
conditional full support (henceforth,  CFS) property are also sticky. The CFS property (see Remark \ref{utal}
below for its definition)
was introduced in the paper Guasoni et al. \cite{grs} and a large class of stochastic processes, 
including fractional Brownian motion (fBm), enjoys this  property, see \cite{cherny, GSZ, HPR, mikko}  for example.

In Guasoni et al. \cite{grs},  it was  shown that processes with CFS can be approximated arbitrarily closely under the supremum norm  by semi-martingales that admit equivalent martingale measures. In the subsequent  paper Bender et al. \cite{bps},  the same result  was obtained for continuous path processes that are merely sticky. In these papers, such approximation  was possible because the stochastic processes 
were assumed to be continuous. 

For the case of jump processes, approximation under the supremum norm, however,  seems difficult if 
not impossible. In this note we show, along with our result on local martingales, that  c\`adl\`ag sticky processes can be approximated by 
martingales (modulo a change of measure to some $Q\sim P$) arbitrarily closely under the $L^p(Q)$ norm. 
%Not only this, we actually show a stronger result: we can choose the martingale to be equal to the 
%original process at any given discrete set of time points, see Remark \ref{subinterval} below.  

 %c\`adl\`ag processes can be approximated by martingales (modulo an equivalent change of measure) under the $L^p$ norm arbitrarily closely.

%In the strongest possible sense the processes should be uniformly close along almost every 
%trajectory. Indeed, for certain continuous local martingales this can be achieved, see Theorem \ref{tetto} below.
%For local martingales with jumps, however, we should be contented with somewhat weaker
%forms of closeness e.g. in $L^p$ norms, see Corollary \ref{lp} below.

The paper is organized as follows. In Section \ref{stick} we recall the stickiness condition.
In Section \ref{exa}, we provide examples of sticky processes. In Section \ref{mr}
we prove that sticky processes can be approximated ``arbitrarily closely''
by  martingales in the sense explained above, see Theorem \ref{main1}
and Corollary \ref{added}. In Section
\ref{loca} we show that,  in the case of local martingales, one can choose the new
probability measure arbitrarily close to the original one in the total variation norm, see Theorem \ref{tetto}
and Corollary \ref{lp}.  In Section
\ref{appli},  we explain the relevance of our results to mathematical finance.
Finally, some technical details are relegated to Section \ref{appendix}.

\section{Sticky processes}\label{stick}

Let $(\Omega,\mathcal{F}, \PP)$ be a probability space. Let $S=(S_t)_{t\in [0,T]}$ be a c\`adl\`ag $\mathbb{R}^d$-valued process adapted to a filtration
$\f=(\mathcal{F}_t)_{t\in [0,T]}$ satisfying the usual assumptions 
(i.e., $\f$ is right continuous and $\FF_0$ contains all of the $\PP$ null sets of $\FF$). 
In this paper, for generality's sake, we do not assume that $\FF_0$ is a trivial $\sigma-$algebra. 
It can contain sets other than just the null and full measure sets.

%on this probability space. 
We say that the process $S$ is \emph{sticky} with respect to the filtration $\f$ if, for any stopping time $\tau$ of $\f$ and any $\FF_{\tau}-$measurable strictly positive random variable $\kappa$, the following condition is satisfied
\begin{equation}\label{stickiness0}
\PP(\sup_{u\in [\tau,T]} |S_u-S_{\tau}|<\kappa|\mathcal{F}_{\tau})>0\mbox{ a.s.} 
\end{equation}
Here $|\cdot|$ is the Euclidean norm of $\R^d$.  This definition is clearly equivalent to Definition 2.2 of 
Guasoni \cite{sticky} where $\kappa$ is assumed to be any deterministic number. In Lemma 3.1 of 
Bender et al. \cite{bps} it was shown that, for processes with continuous paths, stickiness  is equivalent to 
\begin{equation}\label{stickiness1}
\PP(\sup_{u\in [t,T]} |S_u-S_{t}|<\kappa|\mathcal{F}_{t})>0\mbox{ a.s.}, 
\end{equation}
for any deterministic time point  $0\le t\le T$ and any strictly positive and $\FF_{t}$-measurable random variable $\kappa$.  Lemma 3.1 of Bender et al. \cite{bps} is 
also true for c\`adl\`ag processes, this is the content of Lemma \ref{stickiness2} below. 

\begin{lemma}\label{stickiness2} A  c\`adl\`ag process $S$ is sticky iff it satisfies 
(\ref{stickiness1}) for any deterministic  $t\in [0,T]$.
\end{lemma}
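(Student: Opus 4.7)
The forward implication is immediate, as deterministic times are stopping times. For the converse, I plan to establish (\ref{stickiness0}) first for stopping times $\sigma = \sum_k t_k \mathbf{1}_{A_k}$ with $A_k \in \FF_{t_k}$ taking countably many deterministic values, and then to pass to a general stopping time $\tau$ by approximation from above by such discrete-valued stopping times. For the discrete case, given an $\FF_\sigma$-measurable radius $\kappa > 0$, the definition of $\FF_\sigma$ yields an $\FF_{t_k}$-measurable $\tilde\kappa_k > 0$ agreeing with $\kappa$ on $A_k$; applying (\ref{stickiness1}) at each $t_k$ with radius $\tilde\kappa_k$, and using that $\PP(\cdot \mid \FF_\sigma)$ agrees with $\PP(\cdot \mid \FF_{t_k})$ on $A_k$, then gives the desired strict positivity.

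For the general case, let $\sigma_n \downarrow \tau$ be the dyadic round-ups taking values in $\{kT/2^n : 1 \le k \le 2^n\}$, and fix an $\FF_\tau$-measurable $\kappa > 0$; since $\FF_\tau \subset \FF_{\sigma_n}$, the radius $\kappa/3$ is $\FF_{\sigma_n}$-measurable. Set
$$D_n = \{\sup_{u \in [\tau,\sigma_n]} |S_u - S_\tau| < \kappa/3\}, \qquad G_n = \{\sup_{u \in [\sigma_n,T]} |S_u - S_{\sigma_n}| < \kappa/3\}.$$
A triangle-inequality computation, using that $|S_{\sigma_n}-S_\tau| < \kappa/3$ on $D_n$, gives $D_n \cap G_n \subset F := \{\sup_{u \in [\tau,T]}|S_u - S_\tau| < \kappa\}$, while the discrete case applied at $\sigma_n$ yields $f_n := \PP(G_n \mid \FF_{\sigma_n}) > 0$ a.s.

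To conclude, I would argue by contradiction: if $\Omega' := \{\PP(F \mid \FF_\tau) = 0\} \in \FF_\tau$ had positive probability, then the inclusion $D_n \cap G_n \subset F$ combined with $D_n \in \FF_{\sigma_n}$ would give
$$0 = \PP(\Omega' \cap D_n \cap G_n) = \mathbb{E}\bigl[\mathbf{1}_{\Omega'}\, \mathbf{1}_{D_n}\, f_n\bigr],$$
forcing $\PP(\Omega' \cap D_n) = 0$ for every $n$, since $f_n > 0$ a.s. However, right-continuity of the càdlàg paths of $S$ at $\tau$ ensures $\sup_{u \in [\tau,\sigma_n]} |S_u - S_\tau| \to 0$ a.s.\ as $n \to \infty$, so $\bigcup_n D_n$ has full measure, yielding $\PP(\Omega') = 0$, a contradiction.

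The main obstacle I foresee is verifying the $\FF_{\sigma_n}$-measurability of $D_n$, which rests on the progressive measurability of the càdlàg adapted process $S$ over the random interval $[\tau,\sigma_n]$. Beyond this technical point, the argument is driven purely by càdlàg right-continuity and so replaces the uniform-continuity argument used in the continuous setting of \cite{bps}.
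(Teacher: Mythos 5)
Your proof is correct, and it takes a genuinely different route from the paper's. The paper's argument is shorter and more direct: given $A\in\FF_\tau$ with $P(A)>0$, it exploits right-continuity through the countable decomposition
$A=\bigcup_{r\in[0,T]\cap\mathbb{Q}}\bigl(\{\tau\leq r\}\cap\{\sup_{t\in[\tau,r]}|S_t-S_\tau|<\kappa/2\}\cap A\bigr)$
to find a \emph{single} deterministic rational $r$ for which $A_r:=A\cap\{\tau\leq r\}\cap\{\sup_{t\in[\tau,r]}|S_t-S_\tau|<\kappa/2\}$ has positive probability; since $A_r\in\FF_r$, one application of \eqref{stickiness1} at time $r$ plus a triangle inequality finishes the proof. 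Your argument instead first proves \eqref{stickiness0} for all discrete-valued stopping times (via the standard partition identity $\FF_\sigma|_{A_k}=\FF_{t_k}|_{A_k}$), then approximates a general $\tau$ from above by dyadic round-ups $\sigma_n$, forms the increasing events $D_n$, and concludes by a contradiction/monotone-convergence argument using $\bigcup_n D_n=\Omega$ a.s.\ from right-continuity. Both proofs hinge on exactly the same two ingredients -- right-continuity of paths and the fact that $\FF_\tau\subset\FF_\sigma$ when $\tau\leq\sigma$ -- but the paper collapses the approximation into one judiciously chosen rational time, while you carry out the full discretization. Your version is more systematic but longer; the paper's avoids the intermediate lemma on discrete stopping times and the contradiction scaffolding. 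Two small remarks: as you yourself flag, the $\FF_{\sigma_n}$-measurability of $D_n$ needs progressive measurability of $S$ (which the c\`adl\`ag adapted assumption does give); and the contradiction framing can be avoided by arguing directly that for any $A\in\FF_\tau$ with $P(A)>0$ there is $n$ with $P(A\cap D_n)>0$, whence $P(A\cap F)\geq E[1_{A\cap D_n}f_n]>0$.
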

\begin{proof} One direction is trivial. To show the other one, let $0\le \tau \le T$ 
be any stopping time of $\mathbb{F}$. Let $\kappa$ be any strictly positive  $\FF_{\tau}-$measurable random variable. Take any $A\in \FF_{\tau}$ with $P(A)>0$.
We would like to show that
\[
P(A\cap \{\sup_{t\in [\tau, T]}|S_{t}-S_{\tau}|<\kappa \})>0.
\]
Without loss of generality assume that 
$\tau<T$ on $A$. There exists a (deterministic)  
rational number $r>0$ such that
\[
A_r:=A\cap \left\{\sup_{t\in [\tau, r]}|S_t-S_{\tau}|<\frac{\kappa}{2}\right\}\cap\left\{\tau\leq r\right\}
\]
has positive probability. This can be seen from the following, obvious relation:
\[
A=\bigcup_{r\in [0, T]\cap \mathbb{Q}} 
\left(\left\{\tau\leq r\right\}\cap\left\{\sup_{t\in [\tau, r]}|S_t-S_{\tau}|<\frac{\kappa}{2}\right\}\cap A \right),
\]
which holds since $S$ is right-continuous. Observe that $A_r\in \FF_r$ so (\ref{stickiness1}) implies that
\[
P\left(A_r\cap \left\{\sup_{t\in [r, T]}|S_t-S_r|<\frac{\kappa}{2}\right\}\right)>0.
\]
Now the claim follows from
\[
A_r\cap \left\{\sup_{t\in [r, T]}|S_t-S_r|<\frac{\kappa}{2}\right\} \subset A\cap \left\{\sup_{t\in [\tau, T]}|S_{t}-S_{\tau}|<\kappa 
\right\}.
\]
\end{proof}

\begin{remark}{\rm For Markov processes $S$ stickiness reduces to checking 
\[
P(\sup_{u \in [t, T]}|S_{u}-S_t|<\kappa|S_t)>0
\] 
for almost all $\omega$ and all $\kappa>0$, $0\leq t<T$. For processes with independent increments, 
it boils down to $P(\sup_{u\in [t,T]}|S_u-S_t|<\kappa)$
being positive for all $\kappa>0$, $0\leq t<T$. It follows thus from Simon \cite{simon} 
that most L\'evy processes have the stickiness property, see Example \ref{nonny} below. See also Aurzada and Dereich \cite{AD} for
more results on the related theory of ``small deviations''.
}
\end{remark}

\begin{remark}\label{utal}{\rm Processes with the CFS property in any open domain are sticky. We recall the CFS property here.
Let $O$ be a non-empty open subset of $\mathbb{R}^d$ and let $C[a,b](O)$ denote the metric space of 
$O$-valued continuous functions on the interval $[a,b]$ equipped with the metric coming from the supremum norm.
For $x\in O$, set $C_x[a,b](O):=\{f\in C[a,b](O):\,f(a)=x\}$. We say that $S$ has \emph{conditional full support}
in $O$
(CFS-$O$) if $S$ has continuous trajectories in $O$ and
for all $0\leq t<T$, 
$$
\mathrm{supp}\, P(S\vert_{[t,T]}\in \cdot\vert\mathcal{F}_t)=C_{S_t}[t,T](O).
$$
Here $P(S\vert_{[t,T]}\in \cdot\vert\mathcal{F}_t)$ denotes the $\mathcal{F}_t$-conditional
distribution of the $C[t,T](O)$-valued random variable $S\vert_{[t,T]}$. When $O=\mathbb{R}^d$
we simply write CFS instead of CFS-$O$.}
\end{remark}

%If $S$ has CFS-$O$ then it
%is easily seen to be sticky. For instance, fractional Brownian motion always has CFS,
%see Guasoni et al. \cite{grs}. In Cherny \cite{cherny} it 
%was shown that Gaussian moving average processes have CFS. In the relevant papers Gasbarra et al. \cite{GSZ} 
%and Pakkanen \cite{mikko}, certain other classes of processes are shown to have the CFS property.

%For processes with independent increments, stickiness boils down to $P(\sup_{t\in [u,T]}|S_t-S_u|<\kappa)$
%being positive for all $\kappa>0$, $0\leq u<T$,
%by Lemma \ref{stickiness2}. It follows thus from Simon \cite{simon} 
%that most L\'evy processes have the stickiness property, see also Aurzada and Dereich \cite{AD} for
%more on the related theory of ``small deviations''.
\begin{remark} \label{trans}{\rm
Stickiness is invariant under composition with continuous functions and, in the case of $S$ with
continuous trajectories, under bounded time changes, as shown 
in Sayit and Viens \cite{sv}. This helps to generate a large class of sticky processes. 
For example, the process $|B_t|^{\frac{1}{3}}$, where $B_t$ is a one dimensional Brownian motion, is 
not a semimaringale according to Theorem 72 on page 221 of Protter \cite{pro} though it is sticky. 
See Section \ref{exa} for further examples.}
\end{remark}

In the recent paper Bender et al. \cite{bps}, it was shown that if a continuous path process is 
sticky then for any $\e>0$  there exists a semi-martingale $\tilde{S}$ that admits an equivalent martingale 
measure such that 
\begin{equation}\label{beki}
\sup_{t\in [0, T]}|S_t-\tilde{S}_t|<\e
\end{equation}  
holds almost surely. To prove their main result, they constructed a discrete time stochastic sequence
that is sufficiently close to the stochastic sequence obtained by stopping the process at 
each $\varepsilon$-increments and that, in the meantime,  satisfies the conditions of Theorem 2.1 in Kabanov and Stricker \cite{ks}. They were able to show that the sets 
$C_n^i, n\in N, i=1, 2, \cdots, 2d+1$ defined in their paper have positive conditional probabilities, see Lemma 3.3 of that paper.  A closer  look reveals that the continuous path property of the stochastic processes plays a key role in the proof of this Lemma 3.3. In the presence of jumps, we can not obtain the same property for the sets $C_n^i$ as in their Lemma 3.3. However, we are able to prove a similar result for sticky jump processes under an additional assumption which will be stated below. Also, in the presence of jumps, we can only control the \emph{moments}
of the supremum in \eqref{beki}. The following is the assumption that we will need in the proof of our main results.

%We assume that the probability space is large enough
%in the following sense.

\begin{assumption}\label{large}
The probability space supports a $d$-dimensional Brownian motion $B_t$, $t\in [0,T]$ with its augmented
natural filtration $\mathbb{G}=(\mathcal{G}_t)_{t\in [0, T]}$ such that $\mathcal{G}_T$ is independent
of $\mathcal{F}_T$.  
\end{assumption}

\begin{remark} {\rm Such an assumption often appears in stochastic analysis, e.g. 
recall the theorem asserting that a continuous martingale is a time-changed Brownian motion. 
In the present setting, we use this extra Brownian motion to construct a new sticky process 
which is as close as we want to 
the original process and has a sufficiently rich collection of paths. We then use this new sticky process to construct 
the $Q\sim P$ and $\tilde{S}$ we want, see Theorem \ref{main1} below.} 
\end{remark}

\section{Examples}\label{exa}

In this section, we give some examples of sticky processes. As stickiness is invariant under 
various transformations with continuous functions, identifying the stickiness property for 
stochastic processes,  even when they admit martingale measures,  is useful. Most L\'evy processes are known to admit equivalent martingale measures,
see Proposition 9.9 on page 315 of \cite{rama}, for example. However, their transformations under continuous functions may  lose even the semi-martingale property as discussed in 
Remark \ref{trans}.

\begin{example}\label{w} {\rm
Let $W$ denote a $d$-dimensional Brownian motion. Let $b:\mathbb{R}^d\to\mathbb{R}^d$ be locally bounded and
$v:\mathbb{R}^d\to\mathbb{R}^{d\times d}$ be continuous with $v(x)$ non-singular for
all $x\in\mathbb{R}^d$. If the stochastic differential equation
$$
dX_t=b(X_t)dt+v(X_t)dW_t,\ X_0=x,
$$
has a weak solution, unique in law, for all $x\in\mathbb{R}^d$,  then any solution satisfies CFS, a fortiori,
stickiness, as shown
in Guasoni and R\'asonyi \cite{fragile}. 

CFS also holds for many non-semimartingales: fractional Brownian motion and other
Gaussian processes, see \cite{grs,cherny, GSZ}.}
\end{example}

\begin{example}\label{w1} {\rm Let's look at the case of a skew Brownian motion $X_t$ which is defined to be the solution of the following equation
\[
X_t=W_t+\beta L_t^0,
\]
where $W_t$ is a one-dimensional Brownian motion,  $L_t^0$ is local time of the unknown process 
$X_t$ at time $0$, and $\beta$ is a constant with $|\beta|<1$, see Harrison and Shepp \cite{shepp} 
for further  details. Since the local time $L_t^0$ generates a measure singular to the Lebesque measure,  
$X_t$  does not admit any local martingale measure. Let $\alpha=(\beta+1)/2$ and define the strictly 
monotone continuous function $s_{\alpha}$  as  $s_{\alpha}=(1-\alpha)x$ for $x\geq 0$, $\alpha x$ for 
$x<0$ . Let $Y_t=s_{\alpha}(X_t)$. It was shown in Harrison and Shepp \cite{shepp} that 
$Y_t$ satisfies $dY_t=f(Y_t)dW_t$, where $f(x)=1-\alpha$ for $x>0$, $\frac{1}{2}$ for $x=0$, and $\alpha$ 
for $x<0$. Since $f$ is non-singular and bounded,  from the results of Stroock and Varadhan \cite{stva} we 
can conclude that $Y_t$ has full support on the space of continuous functions for any initial value.
Consequently, as $Y_t$ is Markovian, it has CFS and hence it is sticky. It is clear that $Y_t$ 
is a martingale as $f$ is bounded. The process $X_t$ inherits the stickiness property from 
$Y_t$ as it can be written as a composition of $Y_t$ with a strictly monotone continuous function
(the inverse function of $s_{\alpha}$).}
\end{example}

\begin{example}\label{nonny}{\rm 
Let us turn to processes with jumps now. For simplicity we assume $d=1$. 
Let $L$ be a L\'evy process. Then it has the following decomposition
\begin{eqnarray}\label{levy}
L_t=ct+\sigma B_t+\int_{|\theta |<1}\theta \tilde{N}(t, d\theta)+\int_{|\theta|\geq 1}\theta N(t, d\theta),
\end{eqnarray}
for some constants $c, \sigma \in \mathbb{R}$. Here $\nu$ is the L\'evy measure of $L$,
 $N$ the Poisson random measure of $L$,  and  
$\tilde{N}(dt, d\theta)=N(dt, d\theta)-\nu (d\theta)dt$ is its compensated version. $B$ is an independent Brownian motion from $N$.
 %with 
%triplet $(\nu,\sigma^2,c)$, where $\nu$ is the L\'evy-measure, $\sigma^2$ is the
%variance of the Brownian component and $c$ is the coefficient of the linear component. 
It is shown in Simon \cite{simon} that $L$ satisfies the stickiness property provided that
$\sigma^2\neq 0$ or $\int_{-1}^1 |x|\nu(dx)=\infty$. If $\sigma^2=0$ and 
$\int_{-1}^1 |x|\nu(dx)<\infty$ then $L$ satisfies stickiness if $h:=c-\int_{-1}^1 |x|\nu(dx)=0$
or $h>0$ (resp. $h<0$) and, for all $\epsilon>0$, $\nu((-\epsilon,0))>0$
(resp. $\nu((0,\epsilon))>0$). }
\end{example}

\begin{example}\label{w2}{\rm Let $X$ satisfy CFS and let $L$ be a sticky 
L\'evy process such that they are independent. 
Then $S_t:=f(X_t,L_t)$ is  also sticky for any continuous function $f:\mathbb{R}^{d+1}\to\mathbb{R}$,
by Proposition 1 of Sayit and Viens
\cite{sv}. For example, one can replace the Brownian motion 
$B_t$ in (\ref{levy}) by a fractional Brownian motion $B_t^H$ that is independent from $N_t$,  
and obtain a sticky process which is not a semi-martingale.}
\end{example}

\begin{remark}{\rm
We expect that solutions of L\'evy process-driven stochastic differential equations
are also sticky under mild conditions. It is outside the scope of the present paper
to pursue related investigations.}
\end{remark}

\section{Main result}\label{mr}

As explained in the above section, a large class of stochastic processes enjoy the stickiness property. 
Our main goal in this section is to show that martingales (under an equivalent change of measure) live ``near'' to them 
e.g. in the $L^p$ norm. In the following Theorem we state this result and present its proof after some preparations.

\begin{theorem}\label{main1} Let $g:\mathbb{R}_+\to\mathbb{R}_+$ be a convex function with 
$g(0)=0$ and let $\chi>0$ be any fixed number. Let $S$ be a c\`adl\`ag
process which is sticky with respect to $\f$. Let Assumption
\ref{large} be in force. Let $\mathcal{H}_t=\FF_t\vee \mathcal{G}_t$ for each $t\in [0, T]$. Then the process 
$S$ is sticky with respect to $\mathbb{H}=(\mathcal{H}_t)_{t\in [0, T]}$ and  there exists $Q\sim P$ and a 
$d$-dimensional $Q$-martingale $\tilde{S}$ (with respect to $\mathbb{H}$) such that $\tilde{S}_0=S_0$ and
\begin{equation}\label{labda}
E_Qg(\sup_{t\in [0,T]}|S_t-\tilde{S}_t|)<\chi.
\end{equation}
If $S$ has continuous trajectories
then even 
\begin{equation}\label{reffi}
\sup_{t\in [0,T]}|S_t-\tilde{S}_t|<\chi
\end{equation}
holds a.s. 
%If $S$ is (strictly) positive then
%so is $\tilde{S}$.
\end{theorem}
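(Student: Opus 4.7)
The plan is to adapt the discretization plus Kabanov--Stricker strategy of Bender et al.~\cite{bps} to the c\`adl\`ag setting by first perturbing $S$ with a small multiple of the independent Brownian motion $B$; this perturbation ensures the discrete-time increments of the enriched process have rich enough conditional supports, a property that fails in general for jump processes and is what makes the continuous-path argument of \cite{bps} break down.

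First I would verify that $S$ is sticky with respect to $\mathbb{H}$. For an $\mathbb{H}$-stopping time $\tau$ and an $\mathcal{H}_\tau$-measurable $\kappa>0$, independence $\mathcal{G}_T\perp\FF_T$ together with a regular-conditional-distribution argument gives that the $\mathbb{H}$-conditional law of $(S_u)_{u\ge\tau}$ coincides with its $\f$-conditional law, so \eqref{stickiness0} transfers. Next, fix small constants $\alpha,\e>0$ and set $\hat{S}_t:=S_t+\alpha B_t$; this $\hat{S}$ is still $\mathbb{H}$-sticky (sum of a sticky process and an independent continuous one). Define exit times $\tau_0:=0$, $\tau_{n+1}:=\inf\{t>\tau_n:|\hat{S}_t-\hat{S}_{\tau_n}|\ge\e\}\wedge T$; by c\`adl\`ag-ness $\tau_n\uparrow T$ a.s., so for $N$ large $P(\tau_N<T)$ is arbitrarily small. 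I then seek $\tilde{S}$ as the piecewise-constant process taking (as yet unspecified) random values $X_0=S_0,X_1,\ldots,X_N$ on the intervals $[\tau_n,\tau_{n+1})$, and apply Theorem 2.1 of Kabanov--Stricker~\cite{ks} to the discrete sequence $(X_n)$ on the filtration $(\mathcal{H}_{\tau_n})_{n=0}^N$ to produce an equivalent measure $Q\sim P$ making $(X_n)$ a martingale.

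The main obstacle is verifying the hypothesis of Kabanov--Stricker, namely that the convex hull of the $\mathcal{H}_{\tau_n}$-conditional support of $\hat{S}_{\tau_{n+1}}-\hat{S}_{\tau_n}$ contains $0$ in its relative interior on $\{\tau_n<T\}$; this is the analog of Lemma~3.3 in \cite{bps}. In the continuous-path case of \cite{bps}, $\hat{S}$ exits exactly on the $\e$-sphere and stickiness distributes the exit direction over that sphere, but a jump at $\tau_{n+1}$ can overshoot arbitrarily, which is precisely why Assumption~\ref{large} is needed. For each unit vector $e\in\R^d$ one shows the event $\{\hat{S}_{\tau_{n+1}}-\hat{S}_{\tau_n}\approx\e e\}$ has positive $\mathcal{H}_{\tau_n}$-conditional probability by steering $\alpha B$ close to a straight ray in direction $e$ (Brownian paths enjoy full support in $C_0$) while $S$ stays near $S_{\tau_n}$ (by its $\f$-stickiness, after freezing $\mathcal{G}_{\tau_n}$ via independence); running this over a spanning family of directions places $0$ in the interior of the support. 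Once $Q$ is in hand, the bound $\sup_t|S_t-\tilde{S}_t|\le\alpha\sup_t|B_t|+\e$ on $\{\tau_N=T\}$, combined with a prior truncation of $S$ at a bounded level to keep $g(\sup_t|S_t-\tilde{S}_t|)$ $Q$-integrable on the small event $\{\tau_N<T\}$, yields \eqref{labda} for suitably small $\alpha,\e$ and large $N$. In the continuous case the jump issue is absent, $\alpha$ may be taken to be zero, and the pathwise bound \eqref{reffi} follows directly as in \cite{bps}.
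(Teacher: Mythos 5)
Your overall strategy (perturb $S$ by an independent Brownian ``noise'', stop at $\e$-exit times, verify a relative-interior-of-support condition for the increments, and change measure \`a la Dalang--Morton--Willinger/Kabanov--Stricker) is indeed the route the paper takes, and your argument for putting $0$ in the relative interior of the conditional support by steering the Brownian component while $S$ stays put is close in spirit to the paper's proof of \eqref{one-1}. However, two steps of your proposal fail as stated. First, the perturbation $\alpha B_t$ is \emph{unbounded}, and your final error bound $\sup_t|S_t-\tilde{S}_t|\le \alpha\sup_t|B_t|+\e$ cannot yield \eqref{labda} for an arbitrary convex $g$ with $g(0)=0$: such a $g$ may grow faster than any Gaussian-integrable function (e.g.\ $g(x)=e^{x^3}-1$), so $E_Q\,g(\alpha\sup_t|B_t|+\e)$ can be infinite for every $\alpha>0$, and your proposed truncation of $S$ does nothing about the Brownian term. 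The paper avoids exactly this by composing $B$ with a bounded bijection, $W=F(B)$ with values in $(-\e,\e)^d$, so the perturbation error is deterministically small.

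Second, the object you construct is not what Theorem \ref{main1} requires. A piecewise-constant process equal to $X_n$ on $[\tau_n,\tau_{n+1})$ is a martingale only along the discrete times $(\mathcal{H}_{\tau_n})$; for intermediate $t\in(\tau_n,\tau_{n+1})$ one generally has $E_Q[X_{n+1}\mid\mathcal{H}_t]\neq X_n$, so it is not a continuous-time $Q$-martingale with respect to $\mathbb{H}$. The paper instead takes $\tilde S_t:=E_Q[Y_T\mid\mathcal{H}_t]$ and must then \emph{estimate} $E_Qg(\sup_t|S_t-\tilde S_t|)$; this is where bare existence of an equivalent martingale measure (Kabanov--Stricker) is insufficient. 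One needs the quantitative construction of Lemma \ref{infamous}: conditional densities chosen so that the $Q$-mass of ``large'' increments at step $n$ is below $\e/2^n$ and $E_Q\sum_n w(M_n-M_{n-1})<\e$ for a suitable weight $w$, which then feeds into the Doob-inequality estimate of Proposition \ref{dublin}. Your finite-horizon variant has the same defect in a different guise: stopping after $N$ steps with $P(\tau_N<T)$ small gives no control of $Q(\tau_N<T)$ nor of the approximation error there, since the measure change can inflate that event arbitrarily. So the missing ingredients are precisely (i) a bounded perturbation and (ii) a quantitative density construction together with the closed-martingale definition of $\tilde S$ and the ensuing maximal-inequality estimate.
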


\begin{example}\label{alma}
 {\rm In general, it is not possible to replace $Q$ by the physical measure $P$ in 
\eqref{labda} above. This is shown by a simple example:
let $T:=1$, $S_t:=0$, $t<1$,  and let $S_1$ be uniform on $[0,1]$. We take $\mathbb{F}$ to be 
the natural filtration of $S$.
Set $g(x):=|x|$ and choose $\chi:=1/4$.

The process $S$ is trivially sticky. Arguing by contradiction, suppose that there is $\tilde{S}_1$ such that 
$\chi>E\sup_{t\in [0,1]} |E[\tilde{S}_1\vert
\mathcal{H}_t]-S_t|$. Then also $E|E[\tilde{S}_1\vert
\mathcal{H}_0]-0|= |E\tilde{S}_1|<\chi$,
as $S_0=0$ and $\mathcal{H}_0$ is trivial. On the other hand,
$\chi>E\sup_{t\in [0,1]} |E[\tilde{S}_1\vert
\mathcal{F}_t]-S_t|\geq E|\tilde{S}_1
-S_1|$. Noting that $ES_1=1/2$,
this would mean $E\tilde{S}_1>1/4$ while we have just seen that $E\tilde{S}_1<1/4$,
a contradiction.}
\end{example}

\begin{corollary}\label{added} Let $\chi>0$ be any fixed number. Let $S$ be a c\`adl\`ag
process which is sticky with respect to $\f$. Let Assumption
\ref{large} be in force. Let $\mathcal{H}_t=\FF_t\vee \mathcal{G}_t$ for each $t\in [0, T]$. 
For each $p\geq 1$ 
there exists $Q\sim P$ and a 
$d$-dimensional $Q$-martingale $\tilde{S}$ (with respect to $\mathbb{H}$) such that 
\begin{equation}
E_Q\sup_{t\in [0,T]}|S_t-\tilde{S}_t|^p<\chi.
\end{equation}
\end{corollary}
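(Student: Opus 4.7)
The plan is to derive the corollary as an immediate specialization of Theorem \ref{main1}. First I would verify that for each fixed $p\geq 1$ the function $g:\mathbb{R}_+\to\mathbb{R}_+$ defined by $g(x):=x^p$ is convex with $g(0)=0$: convexity follows from the nonnegativity of $g''(x)=p(p-1)x^{p-2}$ on $(0,\infty)$ (with the case $p=1$ being the modulus itself), so $g$ lies in the class of test functions admitted by Theorem \ref{main1}.

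Then I would apply Theorem \ref{main1} to the same c\`adl\`ag sticky process $S$, the same enlarged filtration $\mathbb{H}=(\mathcal{H}_t)_{t\in[0,T]}$ obtained from Assumption \ref{large}, the same constant $\chi>0$, and this choice of $g$. The theorem then yields an equivalent measure $Q\sim P$ and a $d$-dimensional $\mathbb{H}$-martingale $\tilde{S}$ under $Q$ satisfying
\[
E_Q\,g\!\left(\sup_{t\in[0,T]}|S_t-\tilde{S}_t|\right)<\chi,
\]
which, upon substituting $g(x)=x^p$, is exactly the inequality $E_Q\sup_{t\in[0,T]}|S_t-\tilde{S}_t|^p<\chi$ claimed by the corollary. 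There is no genuine obstacle: all the work is absorbed into Theorem \ref{main1}, and the corollary serves only to record the resulting $L^p(Q)$ approximation statement in the form most convenient for the applications discussed in Section \ref{appli}.
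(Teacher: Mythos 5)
Your proof is correct and follows essentially the same route as the paper: the paper's own proof is the one-liner ``let $g(x):=x^p$ and apply Theorem \ref{main1}.'' Your added verification that $g$ is convex with $g(0)=0$ is a harmless elaboration of that same step.
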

\begin{proof}
Indeed, let $g(x):=x^p$, $x\geq 0$, and apply Theorem \ref{main1}. 
\end{proof}

\begin{remark}
{\rm In the case where $S$ is a continuous process, Theorem \ref{main1} was 
proved in Bender et al. \cite{bps} in a slightly different form. In that paper $S$ is assumed to be
positive and $\tilde{S}$ is shown to satisfy 
\begin{equation}\label{rofi}
\sup_{t\in [0,T]}|S_t/\tilde{S}_t-1|<\chi\mbox{ a.s.}
\end{equation}
Minor modifications of that argument would work for not necessarily positive, continuous $S$ and
they would lead to \eqref{reffi} instead of \eqref{rofi}, \emph{without} using Assumption \ref{large}. 
Thus the novelty of Theorem \ref{main1}
lies in treating the case of discontinuous processes, at the price of requiring Assumption \ref{large}.
We do not know whether this assumption could be dropped.}
\end{remark}

The following lemma will be a key ingredient for the proof of Theorem \ref{main1}.
We now consider a discrete-time filtration $(\mathcal{K}_n)_{n\in\mathbb{N}}$. 
We introduce some notation that will be used in the sequel.
For an $\mathbb{R}^d$-valued random variable $X$, let $\mathcal{D}(X)$ be the smallest affine
subspace containing the support
of $\mathrm{Law}(X)$.
Let $\mathcal{S}(X)$ be the relative interior of the convex hull of the support of $\mathrm{Law}(X)$.
The meanings of $\mathcal{D}(\mu)$,
$\mathcal{S}(\mu)$ are analogous for a probability $\mu$ on $\mathbb{R}^d$. 
We denote by $B(x,r)$ the closed ball of radius $r\geq 0$ around $x\in\mathbb{R}^d$.

\begin{lemma}\label{infamous} Fix any $\varepsilon>0$ and assume that $w:\mathbb{R}^d\to\mathbb{R}_+$ 
is a continuous
function with $w(0)=0$ and $w(x)\geq |x|$. 
Let $(M_n)_{n\in\mathbb{N}}$ be a discrete-time 
process adapted to $(\mathcal{K}_n)_{n\in\mathbb{N}}$.
Assume that $0\in\mathcal{S}(Q_n(\cdot,\omega))$
a.s. and, for all $\epsilon>0$, $Q_n(B(0,\epsilon),\omega)>0$ a.s., where $Q_n(\cdot,\cdot)$ is
the conditional law of $M_n-M_{n-1}$ with respect to $\mathcal{K}_{n-1}$, $n\geq 1$.
Assume that there exists a random variable $M_{\infty}$ and $A_n\in\mathcal{K}_n$ such that $1_{A_n}$ 
increases to $1$ a.s.
when $n\to\infty$ 
and 
\begin{equation}\label{masni}
\{M_k=M_{\infty},\ k\geq n\}\supset A_n\supset \{|M_n-M_{n-1}|<\varepsilon\}
\end{equation}
for all $n$. 
Then there is a $Q\sim P$ such that $M_n$, $n\in \mathbb{N}\cup\{\infty\}$, is a uniformly integrable
$Q$-martingale  and 
\begin{equation}\label{obelix}
E_Q\left[\sum_{n=1}^{\infty} w(M_n-M_{n-1})\right]<\varepsilon.
\end{equation}
\end{lemma}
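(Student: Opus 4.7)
The plan is to produce $Q$ as $Q=L_\infty\cdot P$ where $L_n=\prod_{k=1}^n Z_k$ is generated by $\mathcal K_n$-measurable one-step densities $Z_n>0$ satisfying: (a) $E[Z_n\mid \mathcal K_{n-1}]=1$; (b) the one-step martingale condition $E[Z_n(M_n-M_{n-1})\mid\mathcal K_{n-1}]=0$; (c) the conditional $w$-bound $E[Z_n\,w(M_n-M_{n-1})\mid\mathcal K_{n-1}]\le\varepsilon 2^{-n-1}$; and (d) $Z_n\equiv 1$ on $A_{n-1}$, so that the measure change is inert once $M$ has stabilised. Once such $Z_n$ are in hand, the statements of the lemma reduce to direct computations.

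The crux is the one-step construction of $Z_n$ on $A_{n-1}^c$. Writing $Z_n(\omega)=\rho_n(M_n(\omega)-M_{n-1}(\omega),\omega)$, I seek a probability kernel $\tilde Q_n(\cdot,\omega)\sim Q_n(\cdot,\omega)$ (the regular conditional law of $M_n-M_{n-1}$ given $\mathcal K_{n-1}$) with barycenter $0$ and $\int w\,d\tilde Q_n\le\varepsilon 2^{-n-1}$. I would build $\tilde Q_n$ as a convex combination $\alpha\mu_\eta+(1-\alpha)\nu$, where $\mu_\eta$ is the normalised restriction of $Q_n$ to the ball $B(0,\eta)$ (available because $Q_n(B(0,\eta))>0$) and $\nu$ is an auxiliary equivalent probability whose barycenter compensates $\alpha c_\eta$, with $c_\eta$ the barycenter of $\mu_\eta$ (of norm at most $\eta$). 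With $\alpha:=1-\sqrt\eta$ the required compensation has norm at most $\sqrt\eta$ and hence lies in $\mathcal S(Q_n)$ once $\eta$ is small enough, so $\nu$ exists: it may, for instance, be built as a Kabanov--Stricker-type convex combination of small balls around a Carath\'eodory decomposition of that target in $\mathrm{supp}(Q_n)$, regularised by the strictly positive density $e^{-w(x)}\,Q_n(dx)$ to secure $\nu\sim Q_n$ and a finite $w$-moment. A measurable selection argument (e.g.\ Jankov--von Neumann) produces a jointly measurable $\rho_n$; since $\sup_{B(0,\eta)}w\to 0$ as $\eta\downarrow 0$, the choice $\eta=\eta_n$ small enough secures (c).

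I would then verify uniform integrability of $(L_n)$, which is the only step that is not immediate. $L_n$ is a nonnegative $P$-martingale with $EL_n=1$, and by (d) it stabilises a.s.\ at $L_N$, where $N:=\inf\{n:\omega\in A_n\}$ is a.s.\ finite. The decisive estimate is
\[
E[L_n\mathbf 1_{A_n^c}]\le E\!\left[L_{n-1}\,E[Z_n\mathbf 1_{|M_n-M_{n-1}|\ge\varepsilon}\mid\mathcal K_{n-1}]\right]\le\varepsilon^{-1}E\!\left[L_{n-1}\,E[Z_n w(M_n-M_{n-1})\mid\mathcal K_{n-1}]\right]\le 2^{-n-1},
\]
using $A_n^c\subset\{|M_n-M_{n-1}|\ge\varepsilon\}$, $w\ge|\cdot|$, conditional Markov and $EL_{n-1}=1$. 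Combining this with $E[L_n\mathbf 1_{A_n}]=E[L_N\mathbf 1_{N\le n}]\uparrow EL_N$ and $EL_n=1$ forces $EL_N=1$, so $Q:=L_N\cdot P$ is a genuine probability measure equivalent to $P$.

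The rest is bookkeeping. By the Bayes formula, $E_Q[M_n-M_{n-1}\mid\mathcal K_{n-1}]=E[Z_n(M_n-M_{n-1})\mid\mathcal K_{n-1}]=0$, so $M$ is a $Q$-martingale; likewise $E_Q[w(M_n-M_{n-1})]=E[L_n w(M_n-M_{n-1})]\le\varepsilon 2^{-n-1}$, and summing yields $\sum_n E_Q[w(M_n-M_{n-1})]\le\varepsilon/2<\varepsilon$. The same sum bounds $\sum_n|M_n-M_{n-1}|$ in $L^1(Q)$, so by dominated convergence $M_n\to M_\infty$ in $L^1(Q)$, which secures uniform integrability of the martingale. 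The single genuine obstacle I anticipate is the measurable realisation of $\tilde Q_n$ in paragraph two --- routine but notationally heavy, and the only place where both hypotheses $0\in\mathcal S(Q_n)$ and $Q_n(B(0,\epsilon))>0$ are used simultaneously.
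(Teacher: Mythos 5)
Your proposal is correct and follows essentially the same route as the paper: the measure change is assembled as the a.s.\ convergent product of one-step densities $Z_n=\rho_n(M_n-M_{n-1},\cdot)$ that are identically $1$ on $A_{n-1}$, the normalization $E[\prod_n Z_n]=1$ is forced by the estimate $E[L_n\mathbf 1_{A_n^c}]\to 0$ combined with $E L_n=1$, the martingale property follows by Bayes, and UI follows from $\sum_n E_Q|M_n-M_{n-1}|<\varepsilon$. Your one-step construction of a zero-barycenter, small-$w$-moment kernel $\tilde Q_n\sim Q_n$ (the convex combination $\alpha\mu_\eta+(1-\alpha)\nu$ with a measurable-selection step to glue over $\omega$) is exactly the content of the paper's Lemmas~\ref{baba} and~\ref{lema}, just parametrized slightly differently — the paper writes the density as $(r+m)/E[r(Y)+m(Y)]$ with $m$ supported on a small ball and $r\in\mathcal A$ controlling the $w$-moment, and secures existence of the compensating $r$ via a separating-hyperplane argument rather than a Carath\'eodory decomposition.
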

\begin{proof}
By applying  Lemma \ref{lema}  with the choices 
\[
X:=M_n-M_{n-1},\ \mathcal{K}:=\mathcal{K}_{n-1},\ 
\eta:=\varepsilon/2^{n},
\]  
 we obtain $j_n(y,\omega)$ for  each $n\geq 1$.  Define 
 \[
Z_n(\omega):=j_n(M_n(\omega)-M_{n-1}(\omega),\omega).
\]
Set $dQ/dP:=\prod_{n=1}^{\infty} Z_n$. Note that, by the last statement of
Lemma \ref{lema} and by \eqref{masni}, we have that $Z_k=1$ for all $k\geq n+1$ on $A_n$. Hence, for almost all $\omega$, only
finitely many $Z_n(\omega)$ differ from $1$. So the infinite product converges almost surely. We claim that $Q(\Omega)=1$. Indeed, 
by monotone convergence, we have
\begin{eqnarray*}
E\frac{dQ}{dP}&=&\lim_{n\to\infty}E\left[1_{A_n}\frac{dQ}{dP}\right]=\lim_{n\to\infty}
E\left[1_{A_n}Z_{n}\cdots Z_1\right]\\
&\geq& 1-\limsup_{n\to\infty}E\left[1_{A_n^C}Z_{n}\cdots Z_1\right].\\
\end{eqnarray*}
By \eqref{masni}, $A_n^C\subset \{|M_n-M_{n-1}|\geq\varepsilon\}$,  and by Lemma \ref{lema} we have
$$
E\left[Z_{n}1_{\{|M_n-M_{n-1}|\geq\varepsilon\}}\vert\mathcal{K}_{n-1}\right]<\varepsilon/2^{n}.
$$
It follows that
\begin{eqnarray*}
E\left[1_{A_n^C}Z_{n}\cdots Z_1\right] &=& E\left[E[1_{A_n^C}Z_{n}|\mathcal{K}_{n-1}]Z_{n-1}\cdots Z_1\right]\\
&\leq& (\varepsilon/2^{n}) E\left[Z_{n-1}\cdots Z_1\right]
 = \varepsilon/2^{n}\to 0,
\end{eqnarray*}
as $\ n \to \infty$, showing that $Q(\Omega)\geq 1$. Fatou's lemma ensures $Q(\Omega)\leq 1$. 

Now it remains to show that $M_n$ is a uniformly integrable martingale under $Q$. 
The martingale property of $M_n$, $n\in\mathbb{N}$ under $Q$ is clear from the construction of $Q$. 
Since $w(x)\geq |x|$, \eqref{obelix} implies that $M_n$ converges to $M_{\infty}$ 
in $L^1(Q)$ hence $M_n$, $n\in\mathbb{N}\cup\{\infty\}$ is a uniformly integrable martingale under $Q$.
\end{proof}

\begin{remark}\label{neige}
{\rm Assume that $w(x)\geq |x|^{\kappa}$, $x\in\mathbb{R}^d$ with some $\kappa\geq 1$. Then 
a trivial modification of the proof of Lemma \ref{infamous} yields not only \eqref{obelix} but also
\[
\sum_{n=1}^{\infty} E_Q^{1/\kappa} |M_n-M_{n-1}|^{\kappa}<\varepsilon,
\]
which implies
\[
E_Q^{1/\kappa} [\sup_n |M_n|^{\kappa}]<\infty,
\]
whenever $E|M_0|^{\kappa}<\infty$, in particular, when $M_0$ is constant. 
} 
\end{remark}

%Let $S_t$, $t\in [0,T]$ be a $d$-dimensional c\`adl\`ag adapted process on a stochastic basis
%$(\Omega,\mathcal{F},(\mathcal{F}_t)_{t\in [0,T]},P)$.
%We prove the result of Theorem \ref{main1} under an additional assumption first. 

\begin{proposition}\label{dublin}  Assume that $S$ is sticky with respect to $\mathbb{F}$. Let 
$g:\mathbb{R}_+\to\mathbb{R}_+$ be any convex function with $g(0)=0$. Assume that for the sequence $(S_{\tau_n})_{n\geq 0}$
we have $0\in\mathcal{S}(P(S_{\tau_{n+1}}-
S_{\tau_n}\in\cdot|\mathcal{F}_{\tau_n}))$ almost surely,  where
the stopping times $\tau_n$ are recursively defined by 
$$
\tau_0=0,\quad \tau_{n+1}:=\inf\{t>\tau_{n}:\, |S_t-S_{\tau_n}|\geq \varepsilon \}\wedge T, 
$$ 
for some  $\varepsilon>0$.
Then there exists $Q\sim P$ and a 
$d$-dimensional $Q$-martingale $\tilde{S}$ with respect to the filtration $\mathbb{F}$ such that $\tilde{S}_0=S_0$, $\tilde{S}_T=S_T$, and 
\[
E_Qg(\sup_{t\in [0,T]}|S_t-\tilde{S}_t|)<g(2\varepsilon)+2\sqrt{\varepsilon},
\] 
where the latter expression can be made arbitrarily small when $\varepsilon\to 0$.
If $S$ has continuous trajectories then even 
\begin{equation}\label{mala}
\sup_{t\in [0,T]}|S_t-\tilde{S}_t|<2{\varepsilon}
\end{equation}
holds almost surely.  If $S$ is (strictly) positive then
so is $\tilde{S}$. 
\end{proposition}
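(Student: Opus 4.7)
The plan is to discretize via the stopping times $\tau_n$, apply Lemma \ref{infamous} to the sequence $M_n:=S_{\tau_n}$, and then take $\tilde{S}$ to be the closed $Q$-martingale $E_Q[S_T|\mathcal{F}_t]$. I set $\mathcal{K}_n:=\mathcal{F}_{\tau_n}$. Since $S$ is c\`adl\`ag and the event $\tau_n<T$ forces $|S_{\tau_n}-S_{\tau_{n-1}}|\geq\varepsilon$ (the first hitting of a closed set is attained), infinitely many exits would contradict the existence of left limits at $\lim_n\tau_n$, so $\tau_n=T$ eventually a.s., and $M_\infty:=S_T$ is the natural terminal variable. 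With $A_n:=\{\tau_n=T\}$, the hypotheses of Lemma \ref{infamous} hold: $0\in\mathcal{S}(Q_n(\cdot,\omega))$ is assumed; $Q_n(B(0,\epsilon),\omega)>0$ follows from stickiness at $\tau_{n-1}$ applied to the constant $\kappa:=\min(\epsilon,\varepsilon)$; the inclusion $\{|M_n-M_{n-1}|<\varepsilon\}\subset A_n$ holds because $\tau_n<T$ implies $|M_n-M_{n-1}|\geq\varepsilon$; and on $A_n$ one has $\tau_k=T$ and $M_k=M_\infty$ for every $k\geq n$. I will apply the lemma with the choice $w(x):=|x|+g(2|x|)^2$, which is continuous, nonnegative, with $w(0)=0$ and $w(x)\geq|x|$. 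This produces $Q\sim P$ under which $(M_n)$ is a uniformly integrable $Q$-martingale satisfying
\[
E_Q\sum_n|M_n-M_{n-1}|<\varepsilon\quad\text{and}\quad E_Q\sum_n g(2|M_n-M_{n-1}|)^2<\varepsilon.
\]

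Next I take $\tilde{S}_t:=E_Q[S_T|\mathcal{F}_t]$ in its c\`adl\`ag version, an $\mathbb{F}$-$Q$-martingale with $\tilde{S}_T=S_T$. Uniform integrability of $(M_n)$ gives $\tilde{S}_{\tau_n}=E_Q[M_\infty|\mathcal{F}_{\tau_n}]=M_n=S_{\tau_n}$, in particular $\tilde{S}_0=S_0$. For $t\in[\tau_n,\tau_{n+1})$, the definition of $\tau_{n+1}$ yields $|S_t-S_{\tau_n}|<\varepsilon$, so $|S_t-\tilde{S}_t|\leq\varepsilon+R_n$ with $R_n:=\sup_{s\in[\tau_n,\tau_{n+1}]}|\tilde{S}_s-M_n|$, and globally $\sup_t|S_t-\tilde{S}_t|\leq\varepsilon+U$ for $U:=\sup_n R_n$. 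In the continuous case $|M_{n+1}-M_n|\leq\varepsilon$ a.s., and since $\tilde{S}_s-M_n=E_Q[M_{n+1}-M_n|\mathcal{F}_s]$ on $[\tau_n,\tau_{n+1}]$, conditional Jensen bounds $|\tilde{S}_s-M_n|$ by $\varepsilon$ a.s.\ for each $s$; taking sup over a countable dense set and using right-continuity gives $R_n\leq\varepsilon$ a.s., whence $\sup_t|S_t-\tilde{S}_t|\leq 2\varepsilon$ (the strict inequality $<2\varepsilon$ asserted by the proposition is obtained by running the construction with a slightly smaller parameter). If $S>0$ strictly, then $S_T>0$ and $\tilde{S}_t=E_Q[S_T|\mathcal{F}_t]>0$ almost surely.

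For the general c\`adl\`ag estimate, I note that $s\mapsto g(2|\tilde{S}_s-M_n|)$ is a nonnegative $Q$-submartingale on $[\tau_n,\tau_{n+1}]$ (by Jensen, since $x\mapsto g(2|x|)$ is convex) with terminal value $g(2|M_{n+1}-M_n|)$. Doob's $L^2$ maximal inequality yields $E_Q g(2R_n)^2\leq 4E_Q g(2|M_{n+1}-M_n|)^2$, and summing against the bound from the lemma gives $\sum_n E_Q g(2R_n)^2<4\varepsilon$. Because $g$ is nondecreasing on $\mathbb{R}_+$ (convex with $g(0)=0$), $g(2U)=\max_n g(2R_n)$, and combining $\max_n X_n^2\leq\sum_n X_n^2$ with Jensen on the concave $\sqrt{\cdot}$ delivers
\[
E_Q g(2U)\leq\Bigl(E_Q\sum_n g(2R_n)^2\Bigr)^{1/2}<2\sqrt{\varepsilon}.
\]
Finally, convexity of $g$ with $g(0)=0$ gives $g(\varepsilon+U)\leq g(2\varepsilon)+g(2U)$ (writing $\varepsilon+U=(2\varepsilon+2U)/2$), so $E_Q g(\sup_t|S_t-\tilde{S}_t|)\leq g(2\varepsilon)+E_Q g(2U)<g(2\varepsilon)+2\sqrt{\varepsilon}$. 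The main technical point is choosing $w$ so that the summability provided by Lemma \ref{infamous} feeds directly into Doob's $L^2$ inequality and yields the $\sqrt{\varepsilon}$ rate; the squared $g$-term in $w$ is tailored to this purpose.
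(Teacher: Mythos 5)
Your argument is correct and follows essentially the same route as the paper: sample $S$ at the stopping times $\tau_n$, apply Lemma \ref{infamous} with $w(x)=|x|+g^2(2|x|)$, set $\tilde S_t=E_Q[S_T\vert\mathcal{F}_t]$, and combine convexity/monotonicity of $g$ with Doob's $L^2$ maximal inequality to land on $g(2\varepsilon)+2\sqrt{\varepsilon}$. The only cosmetic deviation is that you apply Doob interval by interval to the submartingales $g(2|\tilde S_\cdot-M_n|)$ and then sum in $\ell^2$, whereas the paper applies it once to the single auxiliary martingale $L_s=E_Q[g(2\sup_n|M_{n+1}-M_n|)\vert\mathcal{F}_s]$; both yield the same constant, and your treatment of the continuous case and of positivity matches the paper's (including the same mild looseness about strict versus non-strict inequality at $2\varepsilon$).
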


\begin{remark}\label{larmes}{\rm Comparing Proposition \ref{dublin} to Theorem \ref{main1}, the former does not
require Assumption \ref{large} and it provides $\tilde{S}$ satisfying $S_T=\tilde{S}_T$ but this
comes at the price of a hypothesis involving the $\tau_n$. Still, Proposition \ref{dublin}
improves on previous results \emph{even} in the case of continuous $S$. Indeed, if $S$ has the
CFS property then the conditions of Proposition \ref{dublin} are easily seen to hold, by an argument
similar to Lemma A.2 of Guasoni et al. \cite{grs}. Therefore Proposition 
\ref{dublin} strengthens the conclusion of Theorem 2.11 of Guasoni 
et al. \cite{grs} (see also Theorem 2.1 of the same paper): we get $\tilde{S}$ as in \eqref{mala} but 
satisfying $S_0=\tilde{S}_0$
and $S_T=\tilde{S}_T$ a.s. as well.}
\end{remark}
%We should mention that the conditions of Proposition \ref{dublin} are more stringent than those
%of Theorem \ref{infamous}. However the conclusion is  stronger: here we can guarantee that
%$S_T=\tilde{S}_T$  almost surely.

\begin{proof}[Proof of Proposition \ref{dublin}] 
The idea here is to apply Lemma \ref{infamous} to $S$ sampled at the stopping times $\tau_n$. 
The $Q$ constructed is such that all the increments $S_{\tau_n}-S_{\tau_{n-1}}$ will be ``small''
but $S_{\tau_n}=M_n$, $n\in\mathbb{N}$ is a $Q$-martingale. Then
$\tilde{S}$ will be just the continuous-time $Q$-martingale with terminal value $S_T=M_{\infty}$ and,
by the choice of $Q$, $\sup_{t\in [0,T]} |S_t-\tilde{S}_t|$ will also be ``small''.

Note that $g$ is necessarily continuous (even at $0$).
Set $M_n:=S_{\tau_n}$  and  $\mathcal{K}_n:=\mathcal{F}_{\tau_n}$ for all $n\in\mathbb{N}$. 
Using the notations of Lemma \ref{infamous}, the conditions of the present proposition imply that $0\in\mathcal{S}(Q_n(\cdot,\omega))$
almost surely. The stickiness property guarantees that, for any small real number $\zeta>0$ and  all 
$n\geq 1$, $Q_n(B(0,\zeta),\omega)>0$ almost surely.
Define $A_n:=\{\tau_n=T\}\in\mathcal{K}_n$. As $S$ has c\`adl\`ag paths, for almost all $\omega$, 
the increasing sequence  $\tau_n(\omega)$ can not have a limit strictly less than $T$. This shows that $\tau_n(\omega)=T$ for all $n\geq m(\omega)$
for some $m(\omega)\in\mathbb{N}$ almost surely. Therefore $1_{A_n}$ increases to $1$  almost surely.  

Set $M_{\infty}:=S_T$. From the definition of $\tau_n$ we have 
$\{|M_n-M_{n-1}|<\varepsilon\}\subset A_n$ and therefore \eqref{masni} holds. Using Lemma 
\ref{infamous} with the choice $w(x):=g^2(2|x|)+|x|$ we obtain $Q$. Now define 
$\tilde{S}_t:=E_Q[S_T|\mathcal{F}_t]$,
$t\in [0,T]$ (we take a c\`adl\`ag version of this $Q$-martingale). This definition makes sense
since $S_T$ is $Q$-integrable by $|x|\leq w(x)$ and \eqref{obelix}.
We clearly have $\tilde{S}_0=S_0$ and $\tilde{S}_T=S_T$. It remains to estimate 
\[
\sup_{t\in [0,T]}|S_t-\tilde{S}_t|.
\] 
Fix $t,n$ for a moment and let us work on the event $B_n:=\{\tau_n\leq t<\tau_{n+1}\}$
till further notice.
We have
\begin{eqnarray*}
|S_t-\tilde{S}_t|=|S_{t\wedge \tau_{n+1}}-\tilde{S}_{t\wedge\tau_{n+1}}|=
|S_{t\wedge \tau_{n+1}}-E_Q[\tilde{S}_{\tau_{n+1}}|\mathcal{F}_{t\wedge\tau_{n+1}}]|,
\end{eqnarray*}
by the $Q$-martingale property of $\tilde{S}$. We further have
\begin{eqnarray}\nonumber
\left|E_Q\left[S_{t\wedge \tau_{n+1}}-\tilde{S}_{\tau_{n+1}}\Big|\mathcal{F}_{t\wedge\tau_{n+1}}\right]\right| 
&\leq& E_Q\left[|S_{t\wedge \tau_{n+1}}-\tilde{S}_{\tau_{n}}|+|S_{\tau_n}-\tilde{S}_{\tau_{n+1}}|
\Big|\mathcal{F}_{t\wedge\tau_{n+1}}\right]\\
&\leq & E_Q\left[\varepsilon +|M_{n+1}-M_n|
\Big|\mathcal{F}_{t\wedge\tau_{n+1}}\right],\label{ttt}
\end{eqnarray}
which follows from the definitions of $B_n$,  $\tau_n$,  and  
$$
\tilde{S}_{\tau_{k}}=E_Q[S_T\vert\mathcal{F}_{\tau_k}]=M_{k}=S_{\tau_{k}}
$$
for both $k=n$ and $k=n+1$. Hence we get

\begin{eqnarray*}
g(|S_t-\tilde{S}_t|) &\leq& g\left(\varepsilon + E_Q\left[|M_{n+1}-M_n|\Big\vert\mathcal{F}_{t\wedge\tau_{n+1}}
\right]\right) \\ 
&\leq& \frac{1}{2}\left(g(2\varepsilon) + g\left(E_Q\left[2|M_{n+1}-M_n|\Big\vert
\mathcal{F}_{t\wedge\tau_{n+1}}\right]\right)
\right) \\
&\leq&  g(2\varepsilon)+E_Q[g(2|M_{n+1}-M_n|)\vert\mathcal{F}_{t\wedge\tau_{n+1}}],\\
\end{eqnarray*}
by the convexity of $g$. Noting that $g$ is necessarily non-decreasing, we get

\begin{eqnarray*}
g(|S_t-\tilde{S}_t|) &\leq& g(2\varepsilon)+E_Q[g(2\sup_n|M_{n+1}-M_n|)\vert\mathcal{F}_{t\wedge\tau_{n+1}}]\\
&\leq& g(2\varepsilon)+E_Q[L_T\vert\mathcal{F}_{t\wedge\tau_{n+1}}]\\
&\leq& g(2\varepsilon)+\sup_{s\in [0,T]}L_s, 
\end{eqnarray*}
for the positive $Q$-martingale $L_s:=E_Q[g(2\sup_n|M_{n+1}-M_n|)\vert\mathcal{F}_s]$,
$s\in [0,T]$. The right-hand side here, however, does not depend either on $t$ or on $n$
so this estimate, in fact, holds a.s. on $\Omega=\cup_n B_n$.
Hence
\begin{eqnarray*}
E_Q g(\sup_{t\in [0,T]}|S_t-\tilde{S}_t|) &\leq& g(2\varepsilon)+ E_Q\left[\sup_{s\in [0,T]}L_s\right]\\
&\leq& g(2\varepsilon)+E_Q^{1/2}\left[\sup_{s\in [0,T]}L^2_s\right]\\
&\leq& g(2\varepsilon)+2E_Q^{1/2}L_T^2\\
&\leq& g(2\varepsilon) +2E_Q^{1/2}\left[\sum_{n=0}^{\infty}w(M_{n+1}-M_n)\right]\\
&\leq& g(2\varepsilon)+2\sqrt{\varepsilon},\\
\end{eqnarray*}
using Doob's inequality and Lemma \ref{infamous}. Positivity of $\tilde{S}$ is clear
since $\tilde{S}_t=E_Q[S_T|\mathcal{F}_t]$ and $S_T$ is positive. If $S$ is continuous then
$|S_{\tau_n}-S_{\tau_{n-1}}|\leq \varepsilon$ for all $n$, so we can deduce \eqref{mala} directly from \eqref{ttt}.
\end{proof}

\begin{lemma} \label{sticky-lem} Let $X$ and $Y$ be two independent c\`adl\`ag processes. 
Let $\mathbb{F}=(\FF_t)_{t\in [0, T]}$ and $\mathbb{G}=(\mathcal{G}_t)_{t\in [0, T]}$ be independent, complete, right-continuous filtrations to which $X$ and $Y$ are adapted, respectively. Let 
$\mathcal{H}_t=\FF_t \vee \mathcal{G}_t$ for all $t\in [0, T]$. Then $(\mathcal{H}_t)_{t\in [0, T]}$ is a 
complete, right-continuous filtration. If $X$ is  sticky with respect to $\mathbb{F}$ and $Y$ is sticky with 
respect to $\mathbb{G}$, then all of $X, Y, X\pm Y$ and $(X,Y)$ are sticky with respect to the filtration 
$\mathbb{H}=(\mathcal{H}_t)_{t\in [0, T]}$. 
\end{lemma}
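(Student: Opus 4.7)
The plan is to establish the filtration properties first, then reduce every stickiness claim, via Lemma \ref{stickiness2}, to deterministic times together with an $\mathcal{H}_t$-measurable threshold $\kappa$. Completeness of $\mathbb{H}$ will be immediate because $\mathcal{F}_t$ already contains all $P$-null sets.

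For right-continuity of $\mathbb{H}$ my strategy is to show that conditional expectations factorize across the two independent sides. Concretely, I would verify, by checking on the $\pi$-system of rectangles $B_1\cap B_2$ with $B_1\in\mathcal{F}_s$, $B_2\in\mathcal{G}_s$ and using $\mathcal{F}_T\perp\mathcal{G}_T$, that for any bounded $Z_1\in L^\infty(\mathcal{F}_T)$ and $Z_2\in L^\infty(\mathcal{G}_T)$,
\[
E[Z_1Z_2\vert\mathcal{H}_s]=E[Z_1\vert\mathcal{F}_s]\,E[Z_2\vert\mathcal{G}_s].
\]
As $s\downarrow t$, the right-continuity of $\mathbb{F}$ and $\mathbb{G}$ lets each factor converge a.s., along its right-continuous modification, to $E[Z_1\vert\mathcal{F}_t]$ resp. $E[Z_2\vert\mathcal{G}_t]$, so $E[Z_1Z_2\vert\mathcal{H}_{t+}]=E[Z_1Z_2\vert\mathcal{H}_t]$. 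Because products $Z_1Z_2$ form a multiplicative class generating $\mathcal{H}_T$, a monotone class argument extends this to all bounded $\mathcal{H}_T$-measurable $Z$. Applying it to $Z=1_A$ with $A\in\mathcal{H}_{t+}$ forces $A\in\mathcal{H}_t$ up to a null set, which is then absorbed by completeness.

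For the stickiness of $X$ with respect to $\mathbb{H}$, I would fix a deterministic $t$ and a strictly positive $\mathcal{H}_t$-measurable $\kappa$; a rational approximation lets me replace $\kappa$ by a small deterministic $\varepsilon>0$ on a suitable subset of any prescribed $A\in\mathcal{H}_t$ with $P(A)>0$, so it suffices to show $P(A\cap\{V<\varepsilon\})>0$ for $V:=\sup_{u\in[t,T]}|X_u-X_t|$. The same $\pi$-$\lambda$ computation as above (now with $\mathcal{F}_T$ playing the role of $\mathcal{H}_s$) gives
\[
P(A\vert\mathcal{F}_T)=P(A\vert\mathcal{F}_t)\quad\text{a.s., for every }A\in\mathcal{H}_t,
\]
and since $V$ is $\mathcal{F}_T$-measurable,
\[
P\bigl(A\cap\{V<\varepsilon\}\bigr)=E\bigl[P(A\vert\mathcal{F}_t)\,P(V<\varepsilon\vert\mathcal{F}_t)\bigr].
\]
The first factor is strictly positive on a set of positive measure (because $P(A)>0$) and the second is strictly positive a.s.\ by stickiness of $X$ with respect to $\mathbb{F}$, so the expectation is positive. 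The symmetric argument, with the roles of $(\mathbb{F},X)$ and $(\mathbb{G},Y)$ swapped, handles the stickiness of $Y$.

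Finally, for $(X,Y)$ I would note that the same factorization identity yields conditional independence of $V_X:=\sup_{u\in[t,T]}|X_u-X_t|$ and $V_Y:=\sup_{u\in[t,T]}|Y_u-Y_t|$ given $\mathcal{H}_t$. Since $\{V_X<\kappa/2\}\cap\{V_Y<\kappa/2\}\subseteq\{\sup_{u\in[t,T]}|(X_u,Y_u)-(X_t,Y_t)|<\kappa\}$ and, by conditional independence plus the $\mathcal{H}_t$-measurability of $\kappa$,
\[
P\bigl(V_X<\kappa/2,V_Y<\kappa/2\,\big\vert\,\mathcal{H}_t\bigr)=P(V_X<\kappa/2\vert\mathcal{H}_t)\,P(V_Y<\kappa/2\vert\mathcal{H}_t),
\]
both factors being positive a.s.\ by the previous step, stickiness of $(X,Y)$ follows; the processes $X\pm Y$ are continuous images of $(X,Y)$, so Remark \ref{trans} closes the argument. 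I expect the main obstacle to be the right-continuity of $\mathbb{H}$, which demands careful use of the product structure of $\mathcal{H}_T$ under independence via a multiplicative generating class; once the factorization identity is in hand, the stickiness parts are essentially mechanical reductions to the individual hypotheses.
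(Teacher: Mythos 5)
Your proposal is correct and takes essentially the same route as the paper: the factorization identity you check on rectangles $B_1\cap B_2$ is exactly the paper's Lemma \ref{egyszeru}, and both your right-continuity argument (reverse-martingale limits as $s\downarrow t$ plus completeness, via a monotone class extension from products $Z_1Z_2$) and your reduction of stickiness to deterministic times and deterministic thresholds through Lemma \ref{stickiness2}, combined with the conditional factorization across the independent filtrations, mirror the paper's proof. The only cosmetic difference is that you obtain $X\pm Y$ from the joint stickiness of $(X,Y)$ by composing with a continuous map, whereas the paper invokes Proposition 1 of Sayit and Viens \cite{sv} at that point; the substance is the same.
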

\begin{proof}  First observe that $\mathbb{H}$ is a complete filtration as both 
$\mathbb{F}$ and $\mathbb{G}$ are complete. Therefore it is enough to prove that 
$\mathbb{H}$ is right-continuous and, to this end, it is enough to prove 
$E[Z|\mathcal{H}_t]=E[Z|\mathcal{H}_{t+}]$ for any $\mathcal{H}_T$-measurable nonnegative random variable 
$Z$ and for all $0\le t<T$. By the monotone class theorem, it is enough to prove this equality for 
$Z=UV$ where $U\geq 0$ is $\mathcal{F}_T$-measurable and $V\geq 0$ is $\mathcal{G}_T$-measurable.
However, Lemma \ref{egyszeru} implies that
\begin{eqnarray*}
\lim_{h\to 0} E[UV\vert\mathcal{H}_{t+h}] &=& \lim_{h\to 0} E[U\vert\mathcal{F}_{t+h}] 
E[V\vert\mathcal{G}_{t+h}]\\ &=& E[U\vert\mathcal{F}_{t}] 
E[V\vert\mathcal{G}_{t}]=E[UV\vert\mathcal{H}_{t}],
\end{eqnarray*}
by the right-continuity of $\mathcal{F}_t$, $\mathcal{G}_t$, $t\in [0,T]$. This shows right-continuity
of $\mathcal{H}_t$, $t\in [0,T]$.

To show the second claim in the Lemma it is sufficient to show that $X_t, Y_t$ are sticky for $\mathbb{H}$. 
The stickiness of $X_t\pm Y_t$ with respect to $\mathbb{H}$ then follows from Proposition 1 of 
Sayit and Viens \cite{sv} (continuous functions of sticky processes are sticky). We only show that 
$X$ is sticky for $\mathbb{H}$, the argument for $Y$ being identical. Since $X_t$ is a right-continuous process we need to check
\[
P(\sup_{t\in [s, T]}|X_t-X_{s}|<\kappa|\mathcal{H}_s)>0 \mbox{ a.s.},
\]
for any $\kappa>0$ and any deterministic $s\in [0, T]$ (see Lemma \ref{stickiness2} above). This follows by Lemma \ref{egyszeru} 
from
\[
P(\sup_{t\in [s, T]}|X_t-X_{s}|<\e|\mathcal{F}_s\vee \mathcal{G}_s)=P(\sup_{t\in [s, T]}|X_t-X_{s}|<\e |\mathcal{F}_s)> 0 \mbox{ a.s.,}
\]
as $\mathcal{F}_s\vee\sigma(X)$ is independent from $\mathcal{G}_s$ and $X$ is sticky for $\mathbb{F}$. 

To see the last statement, apply Lemma \ref{egyszeru} to obtain
\begin{eqnarray*}
P(\sup_{t\in [s, T]}|X_t-X_{s}|<\kappa,\,\sup_{t\in [s, T]}|Y_t-Y_{s}|<\kappa|\mathcal{H}_s) &=&\\
P(\sup_{t\in [s, T]}|X_t-X_{s}|<\kappa|\mathcal{F}_s)P(\sup_{t\in [s, T]}|Y_t-Y_{s}|<\kappa|\mathcal{G}_s) &>& 0,
\end{eqnarray*}
by the stickiness of $X,Y$ with their respective filtrations. 
\end{proof}

Now, using the previous arguments, it is possible to establish Theorem \ref{main1}, too. Before presenting the proof 
we make some important observations.

\begin{remark} \label{newremark}{\rm
Let $B_t$ be a Brownian motion with respect to a filtration $\mathcal{L}_t$ and let $0\leq\theta< T$ be an arbitrary deterministic time.
Then, by Theorem 6.1 in Chapter 2 of Karatzas and Shreve \cite{kash}, $B_{s+\theta}-B_{\theta}$, $s\geq 0$, is a Brownian motion independent of
$\mathcal{L}_{\theta}$. Let $C_0[\theta,T]$ denote the space of $\mathbb{R}^d$-valued continuous functions on
$[\theta,T]$ which are $0$ at $\theta$.

Let us first note that the mapping $f\to \sup_{s\in [\theta,T]} |B_s(\omega)-B_{\theta}(\omega)-f_s|$,
$f\in C_0[\theta,T]$, is continuous for a.e. $\omega\in\Omega$ and hence it is jointly measurable in $(\omega,f)$.
It follows that $\{\sup_{s\in [\theta,T]} |B_s(\omega)-B_{\theta}(\omega)-G_s|<\epsilon\}$ and $\{\sup_{s\in [\theta,T]} |B_s(\omega)-B_{\theta}(\omega)-G_s|\leq\epsilon\}$ are events for each $\epsilon>0$, where $G$ is a random element of $C_0[\theta,T]$.

Now define $q(\epsilon,f):=P(\sup_{s\in [\theta,T]} |B_s-B_{\theta}-f_s|<\epsilon)$ and notice that $q(\epsilon,f)>0$
for all $f\in C_0[\theta,T]$ as $B_t-B_{\theta}$, $t\in [\theta,T]$, has full support on $C_0[\theta,T]$.
Fatou's lemma for events shows that $f\to q(\epsilon,f)$ is lower semicontinuous. 
Notice that, if $G^n$ are $\mathcal{L}_{\theta}$-measurable $C_0[\theta,T]$-valued random variables taking
only countable many values, then
\[
P(\sup_{s\in [\theta,T]} |B_s-B_{\theta}-G^n_s|<\epsilon\vert\mathcal{L}_{\theta})=q(\epsilon,G^n).
\]
Now let $G$ be an arbitrary $\mathcal{L}_{\theta}$-measurable random element in $C_0[\theta,T]$.
Choose a sequence $G^n$, $n\in\mathbb{N}$,  of discrete $\mathcal{L}_{\theta}$-measurable random elements in $C_0[\theta,T]$ 
such that $G^n$ tend to $G$ almost surely.
Lower semicontinuity of $q(\epsilon,\cdot)$ and Fatou's lemma for events imply that
\begin{eqnarray}\nonumber
0 &<& q(\epsilon,G)\leq\liminf_n q(\epsilon,G^n)=\liminf_n 
P(\sup_{s\in [\theta,T]} |B_s-B_{\theta}-G^n_s|<\epsilon\vert\mathcal{L}_{\theta})\\
\nonumber &\leq& \liminf_n P(\sup_{s\in [\theta,T]} |B_s-B_{\theta}-G^n_s|\leq\epsilon\vert\mathcal{L}_{\theta})\\
\nonumber &\leq& \limsup_n
P(\sup_{s\in [\theta,T]} |B_s-B_{\theta}-G^n_s|\leq\epsilon\vert\mathcal{L}_{\theta})\\
&\leq& P(\sup_{s\in [\theta,T]} |B_s-B_{\theta}-G_s|\leq \epsilon\vert\mathcal{L}_{\theta}).\label{karc}
\end{eqnarray}
}
\end{remark}

\begin{proof}[Proof of Theorem \ref{main1}.] We wish to apply Proposition \ref{dublin} but
$S$ does not necessarily satisfy $0\in\mathcal{S}(P(S_{\tau_{n+1}}-
S_{\tau_n}\in\cdot|\mathcal{F}_{\tau_n}))$. To fix this, we perturb $S$ by an independent ``small noise''
$W$ such that, for $Y_t:=S_t+W_t$, $0\in\mathcal{S}(P(Y_{\tau_{n+1}}-
Y_{\tau_n}\in\cdot|\mathcal{F}_{\tau_n}))$ holds. As $Y$ is close to $S$, the $\tilde{S}$ constructed
for $Y$ by Proposition \ref{dublin} will also be close to $S$.

Fix any $\varepsilon >0$. Let $B_t=(B_t^1, B_t^2, \cdots, B_t^d)$ be the Brownian motion
of Assumption \ref{large}. We remark that $B$ clearly has the CFS 
property. Let $\pi: (-\infty, +\infty)\rightarrow (-\varepsilon, +\varepsilon)$ 
be a bijective and Lipschitz-continuous (deterministic) function. Let 
$F:\mathbb{R}^d\to (-\varepsilon, +\varepsilon)^d$ be defined by $F(x^1,\ldots,x^d):=(\pi(x^1),\ldots,\pi(x^d))$.
Denote by $L$ a Lipschitz constant for the mapping $F$. Now set
$$
W_t:=(W_t^1, W_t^2, \cdots, W_t^d):=F(B_t^1, B_t^2, \cdots, B_t^d).
$$
Define $Y_t=S_t+W_t$. From Lemma \ref{sticky-lem} above,  $Y$ is sticky for the filtration $\mathbb{H}$. 

%We claim that $Z$ has the CFS property in $(-\e, \e)^d$ with respect to the filtration $\mathbb{H}$.
%To see this,  note that the CFS property of $B$ with respect to $\mathbb{G}$ is equivalent to the stickiness of 
%$B+h$ with respect to $\mathbb{G}$ for any continuous function $h:[0,T]\to\mathbb{R}^d$ with $h(0)=0$. 
%From Lemma \ref{sticky-lem},  $B+h$ is sticky for $\mathbb{H}$ for any such $h$ which, in turn, 
%implies that $B$ has CFS for $\mathbb{H}$. Then the CFS property of $Z$ in $(-\e, \e)^d$ with respect to 
%$\mathbb{H}$  is clear as $f$ is a strictly monotone, continuous function. 

%We remark that the argument of 
%Lemma A.1 of Guasoni et al. \cite{grs} implies that the CFS-$(-\e, \e)^d$ property holds for $Z$
%at each stopping time, too (we refrain from the precise formulation of this statement).

%Below we use the norm $|x|=\max \{|x_1|, |x_2|, \cdots, |x_d|\}$. 

For each positive integer $n\geq 1$, define
\[
\tau_n=\inf\{t\geq \tau_{n-1}: \; |Y_t-Y_{\tau_{n-1}}|\geq \varepsilon\}, \; \; \; \tau_0=0.
\]
These are stopping times with respect to the filtration $\mathbb{H}$. We would like to show  that 
$\triangle_n:=Y_{\tau_n}-Y_{\tau_{n-1}}$ satisfies  
\begin{eqnarray}\label{one-1}
0\in \mathcal{S}(P(\triangle_n\in\cdot |\mathcal{H}_{\tau_{n-1}}))
\end{eqnarray}
almost surely, for each $n$. Fixing $n$, from now on we are working on the set $\{\tau_{n-1}<T\}$ (since \eqref{one-1}
is trivial on $\{\tau_{n-1}=T\}$). We will write $\tau:=\tau_{n-1}$ henceforth.

Let $0<\eta<\varepsilon/2$ be an $\mathcal{H}_{\tau}$-measurable random variable
such that $B(W_{\tau},\eta)\subset (-\varepsilon,\varepsilon)^d$. 
Working separately on events of the form $\{\eta\geq 1/j\}$, $j\in\mathbb{N}$ we
may and will assume that $\eta$ is a constant.

Fix $x\in B(0,\eta/2)\cap\mathbb{Q}^d$. It suffices to show that 
\begin{equation}\label{tor}
x\in\mathrm{supp}\, P(\triangle_n\in\cdot |\mathcal{H}_{\tau})\mbox{ a.s.}
\end{equation}
on $\{\tau<T\}$ since this implies that, outside a null set of $\omega$'s, 
$\mathrm{supp}\, P(\triangle_n\in\cdot |\mathcal{H}_{\tau})(\omega)$
contains $B(0,\eta/2)\cap\mathbb{Q}^d$ hence, being a closed set,
also the whole of $B(0,\eta/2)$. The statement \eqref{tor} will follow if, for each 
$l\in\mathbb{N}$, 
\begin{eqnarray}\nonumber
P(\triangle_n\in B(x,(1+\eta)/l)|\mathcal{H}_{\tau}) &\geq&\\
\label{marci} P(\tau_n=T,\ \triangle_n\in B(x,(1+\eta)/l)|\mathcal{H}_{\tau}) &>& 0
\end{eqnarray}
almost surely on $\{\tau<T\}$.

Fix $l\in\mathbb{N}$. We will now prove \eqref{marci}. 
To this end, define 
\[
J_t=B_{\tau}(\omega)-F^{-1}\left(\frac{t-\tau}{T-\tau}x+F(B_{\tau}(\omega))\right),\mbox{ for }t\in [\tau(\omega), T],
\]
and $J_t=0$, $t<\tau(\omega)$.
This definition makes sense since $|\frac{t-\tau}{T-\tau}x|\leq \eta/2$ and, by the choice of $\eta$,
$\frac{t-\tau}{T-\tau}x+F(B_{\tau})\in (-\varepsilon,\varepsilon)^d$. 
Furthermore, define the events
\begin{eqnarray*}
D(l) &:=& \{\sup_{t\in [\tau,T]} |S_t-S_{\tau}|<\eta/l\},\\ 
K(l) &:=& \{
\sup_{t\in [\tau,T]} |W_t-W_{\tau}|<\eta,\ |W_T-W_{\tau}-x|\in B(0,1/l)\},\\
H(l) &:=& \left\{ \sup_{t\in [\tau,T]} \left|B_t-B_{\tau}+J_t\right|
< \frac{1}{L}\min\{1/l,\eta/2\}\right\}.
\end{eqnarray*}

%Clearly, $\tau_n=T$ and $\triangle_n\in B(x,(1+\eta)/l)$ hold on $D(l)\cap H(l)$. 
%$P(D(l)\cap H(l)|\mathcal{H}_{\tau})>0$ a.s. for all $l$ would imply \eqref{tor} since
%if $x$ were outside the support in question then a small ball around it would also
%be outside that support. 

We first show that 
\begin{equation}\label{hora}
P(D(l)\cap H(l)|\mathcal{H}_{\tau})>0\mbox{ a.s.} 
\end{equation}
on $\{\tau<T\}$. For this, it is sufficient to show that for any $A\in \mathcal{H}_{\tau}$ $A\subset \{\tau<T\}$ and $P(A)>0$, the relation $P(A\cap D(l)\cap H(l))>0$ holds. 
Fix such an $A$ and a deterministic number 
$0<\epsilon_0< \min \{\eta/l,  \frac{1}{L}\min\{1/l,\eta/2\}\}$.   

As $J_t$ is an $\mathcal{H}_{\tau}$-measurable continuous process with $J_{\tau}=0$,
there exists a deterministic number $\theta \le T$ such that  the event $A_1=A\cap \{\sup_{t\in [\tau, \theta]}|J_t|\le \frac{ \epsilon_0}{6}\}\cap \{\tau <\theta\}$ 
has positive probability. Note that $A_1\in\mathcal{H}_{\theta}\cap \mathcal{H}_{\tau}$. The joint stickiness of the process $(S_t, B_t)$, see Lemma \ref{sticky-lem},
shows that the event 
\[
A_2=A_1\cap \{\sup_{t\in [\tau, \theta]}|S_t-S_{\tau}|\le \frac{\epsilon_0}{2}\}\cap \{\sup_{t\in [\tau, \theta]}|B_t-B_{\tau}|\le \frac{\epsilon_0}{6}\}
\]
has positive probability. Now observe that $A_2\cap d(l)\subset D(l)$ where
\[
d(l)=:\{\sup_{t\in [\theta, T]}|S_t-S_{\theta}| \le \frac{\epsilon_0}{2}\}.
\]
We also claim $A_2\cap h(l)\subset H(l)$, where
\[
h(l)=:\{\sup_{t\in [\theta, T]} |B_t-B_{\theta}+J_t-J_{\theta}|\le \frac{\epsilon_0}{3}\}.
\]
Indeed,
\begin{eqnarray*}
\sup_{t\in [\tau,T]} |B_t-B_{\tau}+J_t| &\leq& 
\sup_{t\in [\tau,\theta]} |B_t-B_{\tau}+J_t| + \sup_{t\in [\theta,T]} |B_t-B_{\tau}+J_t|\\
&\leq&
\sup_{t\in [\tau,\theta]} |B_t-B_{\tau}| + \sup_{t\in [\tau,\theta]}|J_t| +
\sup_{t\in [\theta,T]} |B_t-B_{\theta}+J_t-J_{\theta}|\\
&+& |B_{\theta}+J_{\theta}-B_{\tau}|\\ &<&
\epsilon_0/6 +\epsilon_0/6+\epsilon_0/3 + |B_{\theta}-B_{\tau}| +|J(\theta)|\\ &\leq &\epsilon_0
\end{eqnarray*}
on $A_2\cap h(l)$ so $A_2\cap h(l)\subset H(l)$.

We conclude that
\begin{eqnarray}\label{horaa}
A_2\cap d(l)\cap h(l)
\subset A\cap D(l)\cap H(l).
\end{eqnarray}
Therefore it is sufficient to show that the left-hand side of (\ref{horaa}) has positive probability. Since 
$A_2\in\mathcal{H}_{\theta}$, it is sufficient to show that
\begin{eqnarray}\label{horaaa}
P(d(l)\cap h(l)\vert  \mathcal{H_{\theta}})>0\mbox{ a.s.}
\end{eqnarray}

Define $\mathcal{L}_t:=\mathcal{G}_t\vee \mathcal{F}_T$. We can write (\ref{horaaa}) as follows:
\begin{eqnarray*}
\begin{array}{ll}
P(d(l)\cap h(l)\vert\mathcal{H}_{\theta})&=E[E[1_{d(l)\cap h(l)}\vert\mathcal{L}_{\theta}]\vert\mathcal{H}_{\theta}]\\
&=E[1_{d(l)}E[1_{h(l)}\vert\mathcal{L}_{\theta}]\vert\mathcal{H}_{\theta}].
\end{array}
\end{eqnarray*}
Notice that $G_s:=J_s-J_{\theta}$, $s\in [\theta,T]$,  is a $\mathcal{H}_{\theta}\subset\mathcal{L}_{\theta}$-measurable random element in $C_0[\theta,T]$ so
$E[1_{h(l)}\vert\mathcal{L}_{\theta}]\geq 
q(\epsilon_0/3,G)>0$ a.s. by \eqref{karc} in Remark \ref{newremark} above. It follows that
\[
P(d(l)\cap h(l)\vert\mathcal{H}_{\theta})\geq q(\epsilon_0/3,G) P(d(l)\vert\mathcal{H}_{\theta})>0,
\]
by the stickiness of $S$ with respect to $\mathbb{F}$ and by Lemma \ref{sticky-lem}. 
We conclude that (\ref{hora}) holds. 

We will now show that
$$
H(l)\cap D(l)\subset K(l)\cap D(l)\subset \{\tau_n=T,\ \triangle_n\in B(x,(1+\eta)/l) \},
$$
which will entail \eqref{marci}, in view of \eqref{hora}.%, \eqref{mack1} and \eqref{mack}.

The second containment is trivial since $\sup_{t\in [\tau,T]} |W_t-W_{\tau}|<\eta$ and 
$\sup_{t\in [\tau,T]}|S_t-S_{\tau}|<\eta/l$ entail $\sup_{t\in [\tau,T]} |Y_t-Y_{\tau}|<\varepsilon$ 
by $\eta<\varepsilon/2$ which implies $\tau_n=T$. Obviously, $|W_T-W_{\tau}-x|\in B(0,1/l)$ together with $\sup_{t\in [\tau,T]} |S_t-S_{\tau}|<\eta/l$
imply $\triangle_n\in B(x,(1+\eta)/l)$. 

For the first containment, the Lipschitz property of $F$ and $W=F(B)$
clearly imply that
$$
\left|W_t-\frac{t-\tau}{T-\tau}x-W_{\tau}\right|\leq L
\left|B_t-F^{-1}\left(\frac{t-\tau}{T-\tau}x+F(B_{\tau})\right)\right|<\min\left\{\frac{1}{l},\frac{\eta}{2}\right\},
$$
on $H(l)$. For $t=T$ this gives $|W_T-W_{\tau}-x|< 1/l$ whereas
$$
\sup_{t\in [\tau,T]} |W_t-W_{\tau}|\leq \sup_{t\in [\tau,T]} \left|W_t-W_{\tau}-\frac{t-\tau}{T-\tau}x\right| +
\sup_{t\in [\tau, T]}\left|\frac{t-\tau}{T-\tau}x\right|<\eta/2+\eta/2=\eta,
$$
showing $H(l)\subset K(l)$.

Now apply Proposition \ref{dublin} to the process $Y$ with the convex function
$x\to g(2x)$ to obtain $\tilde{S}$. We get 
\begin{eqnarray*}
E_Qg\left(\sup_{t\in [0,T]} |S_t-\tilde{S}_t|\right) &\leq&  
\frac{1}{2}
E_Qg\left(2\sup_{t\in [0,T]} |Y_t-\tilde{S}_t|\right)+\frac{1}{2}E_Qg\left(2\sup_{t\in [0,T]} |Y_t-S_t|\right)\\
&\leq& \frac{g(4\varepsilon)+2\sqrt{\varepsilon}}{2} + \frac{1}{2}g(2\varepsilon),\\
\end{eqnarray*}
which can be made smaller than $\chi$ when $\varepsilon\to 0$. This completes the proof.
\end{proof}

\begin{remark}\label{maj} {\rm 
By Remark \ref{larmes}, Proposition \ref{dublin} applies to Example 
\ref{w}. If $b=0$ in Example \ref{w} then even Theorem \ref{tetto} 
below applies and one can
approximate many local martingales with true ones. 

Let us now recall Example \ref{w1}. Since $Y_t$ is a martingale and the inverse function of 
$s_{\alpha}$ is strictly monotone, the process $X_t$ satisfies  
$0\in \mathcal{S}(P(X_{\tau}-X_{\theta}\in \cdot|\mathcal{F}_{\theta}))$ almost surely even for all
stopping times $\tau\geq \theta$.  So Proposition \ref{dublin} applies to the case of skew Brownian motion.

Theorem \ref{main1}
applies to the large class of processes presented in Example \ref{w2}.}
\end{remark}

%\begin{remark}\label{subinterval} {\rm In Theorem \ref{main1} it is enough to assume 
%stickiness for $S$ on
%each subinterval $[t_k,t_{k+1}]$, $k=0,\ldots,N-1$ with $0=t_0<t_1<\cdots<t_N=T$. Indeed,
%define $Y$ as in the proof of Theorem \ref{main1} and consider 
%the stopping times $\tau_n^k$, recursively defined by 
%$$
%\tau_0^k=t_{k-1},\quad \tau_{n+1}^k:=\inf\{t>\tau_{n}^k:\, |Y_t-Y_{\tau_n^k}|\geq \varepsilon \}\wedge t_k,
%$$ 
%for each $k=1,\ldots,N-1$.
%Invoking Proposition \ref{dublin} on each subinterval, we get
%$\tilde{S}_t$, $t\in [0,T]$ as required, satisfying even $\tilde{S}_{t_k}=Y_{t_k}$ for 
%$k=0,\ldots,N$.}
%\end{remark}

%\begin{remark}\label{surprise} {\rm Let $S$ be strictly positive.
%A modification of the construction in Theorem \ref{main1} provides $\tilde{S}$ with
%\[
%E_Qg\left(\sup_{t\in [0,T]}\left|S_t/\tilde{S}_t-1\right|\right)<\chi
%\]
%instead of 
%\[
%E_Qg\left(\sup_{t\in [0,T]}|S_t-\tilde{S}_t|\right)<\chi.
%\] 
%In the case where $S$ is a continuous process, we even have 
%\[
%1-\chi\leq \inf_{t\in [0,T]} S_t/\tilde{S}_t\leq \sup_{t\in [0,T]} S_t/\tilde{S}_t\leq 1+\chi,
%\]
%almost surely.}
%\end{remark}

\begin{remark}\label{novum} {\rm At first sight, the argument for Proposition \ref{dublin} looks just a 
variant of 
that of Theorem 1.2 in Guasoni et al.  \cite{grs}, see also Kabanov and Stricker \cite{ks} and 
Subsection 3.6.8 in Kabanov and Safarian \cite{kabanov}. Fine details, however, do 
differ significantly. Not only does Proposition \ref{dublin} cover the
case of processes with jumps, too, but it is sharper even in the case of continuous processes,
as we have already pointed out in Remark \ref{larmes}.}
\end{remark}

%\begin{remark} {\rm In the case where $B_t$, $t\in [0,1]$ is fractional Brownian motion, Theorem 5.3 in \cite{cs} 
%established, for $\alpha>0$ small enough, the existence of $Q\sim P$ and an It\^o process 
%$X_t$ such that $e^{X_t}$ is a local
%martingale under $Q$, $B_t-\alpha\leq X_t\leq B_t$ a.s. for all $t$ and, for all $\varepsilon>0$, $X_t$
%touches both $B_t-\alpha$ and $B_t$ with probability at least $1-\varepsilon$.

%In view of Proposition \ref{dublin} and Remark \ref{surprise} this comes as no surprise. $B$ has the conditional 
%full support property (see Section \ref{exa} below) which is inherited by $S_t:=B_t-\alpha t$, $t\in [0,1]$, for
%any $\alpha>0$. This implies not only stickiness of $S$, but also $0\in\mathcal{S}(P(S_{\tau}-
%S_{\sigma}\in\cdot|\mathcal{F}_{\sigma}))$, by an argument analogous to Lemma A.2 of \cite{grs}.
%Now from Proposition \ref{dublin} we can get $Q\sim P$ and a 
%$Q$-{martingale} $\tilde{S}$ such that $X:=\ln(\tilde{S})$ satisfies $X_0=B_0$,
%$X_1=B_1-\alpha$ {almost surely}.
%}
%\end{remark}

\section{Local martingales}\label{loca}

We denote by $\Vert\cdot\Vert_{tv}$ the total variation norm for finite signed measures
on $(\Omega,\mathcal{F})$. 

\begin{theorem}\label{tetto}
Let $g:\mathbb{R}_+\to\mathbb{R}_+$ be convex with $g(0)=0$ and let $\chi>0$.
Let Assumption \ref{large} be in force.
Assume that $S$ is a sticky local martingale. Then there exists $Q\sim P$ 
with $\Vert Q-P\Vert_{tv}<\chi$ and a 
$d$-dimensional $Q$-martingale $\tilde{S}$ with respect to the enlarged filtration $\mathbb{H}$
such that $E_Qg(\sup_{t\in [0,T]}|S_t-\tilde{S}_t|)<\chi$. If $S$ has continuous trajectories
then even $\sup_{t\in [0,T]}|S_t-\tilde{S}_t|<\chi$ holds a.s. 
Finally, $S$ remains a local martingale with respect to $\mathbb{H}$, too.
\end{theorem}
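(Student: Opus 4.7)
My plan is to combine the construction behind Theorem \ref{main1} with a localization that exploits the local martingale property so that the change of measure is trivial on an event of $P$-measure close to $1$. To begin, choose a stopping time $\sigma$ from a fundamental sequence $\sigma_n\uparrow\infty$ for $S$ such that $S^\sigma$ is a true (and uniformly integrable) $P$-martingale and $P(\sigma<T)<\chi/4$. Since $\mathcal{G}_T$ is independent of $\mathcal{F}_T$, Lemma \ref{egyszeru} gives $E[X\vert\mathcal{H}_t]=E[X\vert\mathcal{F}_t]$ for every integrable $\mathcal{F}_T$-measurable $X$, so every $\mathbb{F}$-(local) martingale remains an $\mathbb{H}$-(local) martingale with the same localizing sequence; this already disposes of the final sentence of the theorem.

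Next I would repeat the proof of Theorem \ref{main1} essentially verbatim, with the single modification that the sampling sequence of Proposition \ref{dublin} is initialised at $\tau_0:=\sigma$ rather than $\tau_0:=0$. Form the perturbation $Y:=S+W$ with $W:=F(B)$ as there, so that $Y$ is $\mathbb{H}$-sticky by Lemma \ref{sticky-lem}, and set $\tau_{n+1}:=\inf\{t>\tau_n\colon|Y_t-Y_{\tau_n}|\geq\varepsilon\}\wedge T$. Running the Lemma \ref{infamous}/Lemma \ref{lema} machinery on $M_n:=Y_{\tau_n}$ yields a density $dQ/dP=\prod_{n\geq 1}Z_n$ and a $Q$-martingale $\tilde S_t:=E_Q[Y_T\vert\mathcal{H}_t]$. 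The crucial observation is that on $A:=\{\sigma=T\}$ we have $\tau_n\equiv T$ and $M_n\equiv Y_T$ for every $n\geq 0$, so the ``$j(0,\omega)=1$'' clause in Lemma \ref{lema} forces $Z_n\equiv 1$ on $A$ for every $n$; consequently $dQ/dP=1$ on $A$.

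The total variation bound then comes for free: $E[dQ/dP]=1$ together with $dQ/dP=1$ on $A$ gives $E[dQ/dP\,1_{A^c}]=P(A^c)$, whence
\[
\Vert Q-P\Vert_{tv}=E|dQ/dP-1|=E\bigl[|dQ/dP-1|\,1_{A^c}\bigr]\leq E[dQ/dP\,1_{A^c}]+P(A^c)=2P(A^c)<\chi/2.
\]
The $g$-expectation bound is inherited from Proposition \ref{dublin} applied with the convex function $x\mapsto g(2x)$: combined with the triangle inequality, the estimate $|S-Y|=|W|<\varepsilon$, and the midpoint convexity $g(a+b)\leq\tfrac12 g(2a)+\tfrac12 g(2b)$, one obtains $E_Q g(\sup_t|S_t-\tilde S_t|)\leq\tfrac12\bigl(g(4\varepsilon)+2\sqrt{\varepsilon}\bigr)+\tfrac12 g(2\varepsilon)$, which can be made smaller than $\chi/2$ by choosing $\varepsilon$ small. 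The continuous case follows analogously from \eqref{mala}.

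The step I expect to require the most care is verifying that the support hypothesis of Proposition \ref{dublin}, $0\in\mathcal{S}(P(Y_{\tau_{n+1}}-Y_{\tau_n}\vert\mathcal{H}_{\tau_n}))$, is still available for the increment at $n=0$, now that $\tau_0=\sigma$ is a genuine stopping time rather than the deterministic time $0$. The nested-event argument in the proof of Theorem \ref{main1} (involving the events $D(l),K(l),H(l)$ and the full-support property of the Brownian motion on $C_0[\theta,T]$) should transfer with only cosmetic changes once one uses Assumption \ref{large} to observe that the Brownian increments $B_\cdot-B_\sigma$ after $\sigma$ are still independent of $\mathcal{H}_\sigma$ and have full support on $C_0[\sigma,T]$; the stickiness of $S$ with respect to $\mathbb{F}$ is of course unaffected by the choice of $\sigma$.
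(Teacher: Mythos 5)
Your overall scaffolding matches the paper's proof: localize at a stopping time $\sigma=\sigma_k$ with $P(\sigma<T)$ small, rerun the Theorem \ref{main1}/Proposition \ref{dublin} machinery with $\tau_0:=\sigma\wedge T$, observe that the density $dQ/dP=\prod_n Z_n$ equals $1$ on $\{\sigma\wedge T=T\}$ (the paper uses exactly this, then Scheff\'e; your direct computation $\Vert Q-P\Vert_{tv}\le 2P(\sigma<T)$ is a perfectly good, even more quantitative, substitute), and handle the last sentence via independence of $\mathcal{G}_T$ and $\mathcal{F}_T$. Your closing worry about the support condition at $n=0$ is also correctly judged to be harmless, since in the paper's proof of Theorem \ref{main1} the times $\tau_{n-1}$ are already genuine stopping times and the reduction to a deterministic $\theta$ is built into that argument.

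However, there is a genuine gap in the approximation estimate. With the sampling initialised at $\tau_0=\sigma\wedge T$, the estimate produced by Proposition \ref{dublin}'s argument (the decomposition over $B_n=\{\tau_n\le t<\tau_{n+1}\}$) only covers $t\in[\sigma\wedge T,T]$; it says nothing about $\sup_{t\in[0,\sigma\wedge T)}|S_t-\tilde S_t|$, where $\tilde S_t=E_Q[Y_T\vert\mathcal{H}_t]$ has no a priori reason to be near $S_t$. Your proposal never addresses this interval, and indeed never uses the martingale property of the stopped process $S^{\sigma}$ beyond choosing $P(\sigma<T)$ small --- but that cannot suffice: if it did, the same argument would give, for an \emph{arbitrary} sticky process, a $Q$ arbitrarily TV-close to $P$ together with the $g$-bound, which is ruled out by the reasoning of Example \ref{alma}. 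The missing step is precisely where the paper uses the local martingale hypothesis: since each $Z_j$ is a conditional density given $\mathcal{H}_{\tau_{j-1}}$ with $\tau_{j-1}\ge\sigma\wedge T$, one has $E[dQ/dP\vert\mathcal{H}_{T\wedge\sigma}]=1$, and then a Bayes-formula/optional-stopping computation, combined with $\tilde S_{T\wedge\sigma}=S_{T\wedge\sigma}$ and the $P$-martingale property of $S_{\cdot\wedge\sigma}$ (valid also for $\mathbb{H}$ by independence), shows $\tilde S_t=S_t$ a.s.\ on $\{t\le\sigma\wedge T\}$. Only after this identification does the bound on $[\sigma\wedge T,T]$ upgrade to the claimed bound on $\sup_{t\in[0,T]}$ (and similarly for the continuous-path statement via \eqref{mala}). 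You should add this computation; the rest of your argument then goes through.
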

\begin{proof} 
Let $\sigma_n$, $n\in\mathbb{N}$ be the stopping times increasing to $\infty$ such that $S_{t\wedge\sigma_n}$, $t\in [0,T]$
is a martingale for each $n$. Fix $k$ for the moment.
We will apply the proof of Theorem \ref{main1} (which relies on Proposition
\ref{dublin}), starting from $\sigma_k\wedge T$, that is,
using the sequence
\[
\tau_0(k):=\sigma_k\wedge T,\quad \tau_{n+1}(k):=\inf\{t>\tau_{n}(k):\, |Y_t-Y_{\tau_n}|\geq \varepsilon \}\wedge T,
\] 
where $Y_t:=S_t+W_t$ and $W_t$ is as in the proof of Theorem \ref{main1}.

Apply the argument of Proposition \ref{dublin} starting from $\sigma_k\wedge T$ instead of $0$, 
using Lemma \ref{infamous} for $M_n:=Y_{\tau_n(k)}$. Choosing $\varepsilon$ small
enough, we get that there is $Q(k)\sim P$ such that 
\[
E_{Q(k)} g\left(\sup_{\sigma_k\leq t\leq T}|S_t-\tilde{S}_t|\right)<\chi \; \;
\mbox{and}\;\;  dQ(k)/dP=\prod_{j=1}^{\infty}Z^{(k)}_j,
\]
where we define $\tilde{S}_t=E_Q[S_T\vert\mathcal{H}_t]$, for \emph{all} $t\in [0,T]$.
Here $Z^{(k)}_j$ is $\mathcal{H}_{\tau_j(k)}$-measurable for each $j\in\mathbb{N}$, corresponding to the $Z_j$
appearing in the proof of Lemma \ref{infamous}. We now check that
$S_t=\tilde{S}_t$ a.s. on $\{t\leq\sigma_k\wedge T\}$. Indeed, on this set
\begin{eqnarray*}
S_t-\tilde{S}_t &=&E_{Q(k)}[S_{t\wedge \sigma_k}-\tilde{S}_{t\wedge\sigma_k}|\mathcal{H}_{t\wedge\sigma_k}]\\
&=&E_{Q(k)}[S_{t\wedge \sigma_k}-{S}_{T\wedge\sigma_k}|\mathcal{H}_{t\wedge\sigma_k}] \\
&=&\frac{E[(dQ(k)/dP)[S_{t\wedge\sigma_k}-S_{T\wedge\sigma_k}]|\mathcal{H}_{t\wedge\sigma_k}]}{
E[dQ(k)/dP\vert\mathcal{H}_{t\wedge\sigma_k}]}\\  
&=&\frac{E[E[(dQ(k)/dP)[S_{t\wedge\sigma_k}-S_{T\wedge\sigma_k}]|\mathcal{H}_{T\wedge\sigma_k}]|\mathcal{H}_{t\wedge\sigma_k}]}
{E[E[dQ(k)/dP\vert\mathcal{H}_{T\wedge\sigma_k}]\vert\mathcal{H}_{t\wedge\sigma_k}]}\\
&=&\frac{E[E[dQ(k)/dP|\mathcal{H}_{T\wedge\sigma_k}](S_{t\wedge\sigma_k}-S_{T\wedge\sigma_k})|\mathcal{H}_{t\wedge\sigma_k}]}{E[E[dQ(k)/dP\vert\mathcal{H}_{T\wedge\sigma_k}]\vert\mathcal{H}_{t\wedge\sigma_k}]}\\
 &=&0,\\
\end{eqnarray*}
which follows from $\tilde{S}_{T\wedge\sigma_k}=S_{T\wedge\sigma_k}$, $E[dQ(k)/dP\vert\mathcal{H}_{T\wedge\sigma_k}]=1$,  and 
the martingale property of $S$ up to $\sigma_k$ under $P$.  Now note that $P(dQ(k)/dP=1)\geq P(\sigma_k\wedge T=T)\to 1$, $k\to\infty$, which implies that $dQ(k)/dP$
tends to $1$ in probability, hence  almost surely along a subsequence. Using Scheff\'e's theorem,
we can find $k$ with $\Vert Q(k)-P\Vert_{tv}<\chi$. The last statement is clear
since $\mathcal{G}_T$ is independent of $\mathcal{F}_T$.
\end{proof}

\begin{corollary}\label{lp} Let Assumption \ref{large} be in force.
Let $p\geq 1$ be arbitrary and let $S$ be a sticky local martingale. Then for all $\chi>0$
there exists $Q\sim P$ with $||Q-P||_{tv}<\chi$ and a $Q$-martingale $\tilde{S}$
with respect to the enlarged filtration $\mathbb{H}$ such that 
$$
E_Q\sup_{t\in [0,T]}|S_t-\tilde{S}_t|^p<\chi
$$
is satisfied.\hfill $\Box$
\end{corollary}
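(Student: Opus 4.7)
The plan is to deduce Corollary \ref{lp} as an immediate specialization of Theorem \ref{tetto}, exactly in the spirit of how Corollary \ref{added} was deduced from Theorem \ref{main1}. Concretely, I would set $g(x) := x^p$ for $x \geq 0$. Since $p \geq 1$, this function is convex on $\mathbb{R}_+$ and satisfies $g(0) = 0$, so it meets the hypotheses of Theorem \ref{tetto}.

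Applying Theorem \ref{tetto} with this choice of $g$ and the given $\chi > 0$ then supplies a probability measure $Q \sim P$ with $\|Q - P\|_{tv} < \chi$ together with an $\mathbb{H}$-martingale $\tilde{S}$ under $Q$ satisfying
\[
E_Q \sup_{t \in [0,T]} |S_t - \tilde{S}_t|^p \;=\; E_Q\, g\!\left(\sup_{t \in [0,T]} |S_t - \tilde{S}_t|\right) \;<\; \chi,
\]
which is precisely the stated conclusion. Since the claim is a pure specialization, there is no genuine obstacle to overcome: all the real work, in particular the construction of $Q$ close to $P$ in total variation, has already been absorbed into the statement of Theorem \ref{tetto}.
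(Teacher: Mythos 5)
Your proof is correct and coincides with the paper's intended argument: the corollary is stated with an immediate-proof symbol precisely because it follows from Theorem \ref{tetto} by taking $g(x)=x^p$, which is convex with $g(0)=0$ for $p\geq 1$, exactly as you do (mirroring how Corollary \ref{added} follows from Theorem \ref{main1}).
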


\begin{remark} {\rm A \emph{strict local martingale} is a local martingale which is not a martingale.
It is not difficult to construct sticky strict local martingales by using Proposition 3.8 (also Corollary 3.9) of Elworthy et al.  \cite{yor}. We can choose any continuous and nonincreasing $m: \mathbb{R}^+\rightarrow (0, 1]$
with $m(0)=1$ and let $M_t=1/R_{r^{-1}(m(t))}$, where $R_t$ is a 3-dimensional Bessel process starting from $1$ and $r(t)=E(1/R_t)$. Then from Proposition 3.8 of Elworthy et al. \cite{yor}, $M_t$ is a strict local martingale with $m(t)=EM_t$.
($M_t$ is strict local martingale as long as 
$m(t)$ is not a constant).
$M_t$ is sticky as it is obtained from a sticky process $R$ by transformation under continuous function $\frac{1}{x}, x>0$  and by bounded time change.} 
\end{remark}

%\begin{remark}\label{surprise1} {\rm For strictly positive, sticky local martingales
%a modification of the construction in Theorem \ref{tetto} provides $\tilde{S}$ with
%$$
%E_Qg\left(\sup_{t\in [0,T]}\left|S_t/\tilde{S}_t-1\right|\right)<\chi.
%$$
%In the case where $S$ is continuous,
%even 
%$$
%1-\chi\leq \inf_{t\in [0,T]} S_t/\tilde{S}_t\leq \sup_{t\in [0,T]} S_t/\tilde{S}_t\leq 1+\chi
%$$ 
%can be guaranteed a.s. This latter conclusion has already been proved for a large class of diffusion processes in
%Guasoni and R\'asonyi \cite{fragile}. In that paper, however, no estimate for $||Q-P||_{tv}$ was 
%shown and discontinuous
%processes weren't treated either.}
%\end{remark}

\begin{remark} {\rm Strict local martingales (which are not martingales) 
have been suggested as models for financial bubbles, see Protter \cite{protter}. In this context, $P$
is the pricing measure, $S$ is the price process of risky assets. 
Theorem \ref{tetto} and Corollary \ref{lp} reiterate the word of caution already pronounced
in Guasoni and R\'asonyi \cite{fragile}: an arbitrarily small mis-specification of option and asset prices (that is,
mistaking $Q$ for $P$ and $\tilde{S}$ for $S$) may destroy the ``bubble phenomenon'' generated by
$S$ under $P$ since $\tilde{S}$ is a martingale under $Q$, admitting no bubbles.}
\end{remark}

\section{Application to mathematical finance}\label{appli}

A central concept of mathematical finance is arbitrage, i.e. riskless profit. 
Such opportunities should not exist in an efficient market. Arbitrage theory is
well-understood in idealized models of financial markets where the presence of frictions (transaction fees,
liquidity effects) is disregarded, see e.g. \cite{ds}. There is also a fairly clear
picture in the case of proportional transaction costs, where, roughly speaking, trading costs
are linear functions of the trading speed, see \cite{kabanov}. Illiquid markets,
however, show new phenomena due to a \emph{superlinear} dependence of trading costs on
the trading speed. In such market models a characterization for the absence of arbitrage 
in terms of dual variables
has been provided in the paper \cite{gr}, see Theorem \ref{char} below.

We will apply the results of the present paper to show that a very large class of
candidate price processes (namely the sticky ones) enjoy an absence of arbitrage property
in markets with superlinear liquidity effects. We now briefly sketch (a slightly simplified
version of) the model in Guasoni and R\'asonyi \cite{gr}, see \cite{gr} for further details.

Staying on the stochastic basis $(\Omega,\mathcal{F},P,\mathbb{F})$, let $S$ describe the price of $d$ risky assets in a financial market when trading is (infinitely) slow.
Liquidity effects will be described by a cost function 
$G:\Omega\times [0,T]\times \mathbb{R}^d\to\mathbb{R}_+$ 
which is assumed $\mathcal{O}\otimes\mathcal{B}(\mathbb{R}^d)$-measurable where
$\mathcal{O}$ is the optional sigma-field. We furthermore assume that
$G(\omega,t,\cdot)$ is convex with $G(\omega,t,x)\geq G(\omega,t,0):=0$ for all $\omega,t,x$. 
Henceforth, set $G_t(x):=G(\omega,t,x)$, i.e. the dependence on $\omega$ is omitted, and $t$ is used as a 
subscript. There is also a riskless asset $S^0_t$ of price constant $1$, $t\in [0,T]$.

A \emph{feasible strategy} is a process $\phi$ in the class
\begin{equation}
\mathcal{A}:=\left\{\phi:\phi\text{ is a $\mathbb{R}^d$-valued, optional process},
\int_0^T \vert\phi_u\vert du<\infty\text{ a.s.}\right\},
\end{equation}
i.e. the speed of trading $\phi_t$ at $t$ is assumed to be finite and the traded quantity of stocks over $[0,T]$
as well.

With this definition, for a given strategy $\phi\in\mathcal{A}$ and an initial asset position 
$z=(z^0,\ldots,z^d)\in\mathbb{R}^{d+1}$, the resulting positions at time $t\in [0,T]$ in the risky and safe assets are defined as:
\begin{eqnarray}\label{eq:w}
X^i_t(z,\phi) &:=& z^i+\int_0^t \phi_u^i du,\quad 1\le i\le d,\\
\label{eq:selffin}
X^0_t(z,\phi) &:=& z^0-\int_0^t \phi_u S_u du-\int_0^t G_u(\phi_u)du.
\end{eqnarray}

The main item in the following assumption is the superlinearity condition \eqref{eq:superlinear}: it 
expresses that fast trading has an effect which is stronger than linear as a function of 
the trading speed.
\begin{assumption}\label{below}
There is $\alpha>1$ and $H>0$ such that
\begin{eqnarray}
%\label{eq:hpositive}
%\inf_{t\in [0,T]} H_t &>&\ 0 \quad\text{a.s.},\\
\label{eq:superlinear}
G_t(x) &\geq&\ H\vert x\vert^{\alpha}, \quad\text{for all }\omega,t,x,\\
\label{eq:locint}
\int_0^T \left(\sup_{|x|\leq N}G_t(x) \right)dt &<&\ \infty \quad\text{a.s. for all }N>0.
\end{eqnarray}
\end{assumption}

Define also
\begin{equation*}
G_t^*(y) := \sup_{x\in\mathbb{R}^d} (xy-G_t(x))\geq 0,\ y\in\mathbb{R}^d,\ t\in [0,T].
\end{equation*}

We will apply results of Section \ref{mr} to investigate under which conditions such market models are
free of arbitrage. An \emph{arbitrage of the second kind} is a strategy $\phi\in\mathcal{A}$, 
such that $X_T^i(z,\phi)\geq 0$, $i=0,1,\ldots,d$ with $z=(c,0,\ldots,0)$  for some $c<0$. 
Absence of arbitrage of the second kind {(NA2)} holds if no such opportunity exists.

We reproduce Theorem 4.2 of Guasoni and R\'asonyi \cite{gr} below\footnote{Unfortunately,
the conditions ``$\{Z_t^i=0,\ i=1,\ldots,T\}\subset \{Z_t^0=0\}$
a.s. for all $t$, $Z_0^0=1$'' are missing from the statement of that theorem in Guasoni and R\'asonyi 
\cite{gr} (but they
are apparently needed in view of the preceding results there). 
Here in Theorem \ref{char} we state the corrected version.}
which characterizes (NA2). The notation $L^{p}(Q)$ for $p\geq 1$
refers to the usual Banach space of $d$-dimensional random variables with finite $p$th (absolute) moment
under the probability $Q\sim P$.
\begin{theorem}\label{char} 
Let $\mathcal{F}_0$ be trivial, let Assumption \ref{below} hold, fix $1<\beta<\alpha$ and let 
$1/\beta+1/\gamma=1$.
{\rm (NA2)} holds if and only if, for all $\chi>0$, 
there exists $Q\sim P$ with 
$$
E_Q \int_0^T (1+|S_t|)^{\beta\alpha/(\alpha-\beta)}dt <\infty 
$$
and an $\mathbb{R}^{d+1}_+$-valued 
$Q$-martingale $Z$ with $Z_T\in L^{\gamma}(Q)$ such that $\{Z_t^i=0,\ i=1,\ldots,T\}\subset \{Z_t^0=0\}$
a.s. for all $t$, $Z_0^0=1$ and
$E_Q\int_0^T Z^0_t G_t^*(\bar{Z}_t-S_t)dt<\chi$
where $\bar{Z}_t^i=(Z_t^i/Z^0_t) 1_{\{Z^0_t\neq 0\}}$, $i=1,\ldots,d$.
\hfill $\Box$
\end{theorem}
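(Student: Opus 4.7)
The plan is to prove the equivalence by the two standard directions: sufficiency is a direct Fenchel/Young duality computation combined with the $Q$-martingale property of $Z$, while necessity is a Kreps--Yan style separation that crucially exploits the superlinearity hypothesis \eqref{eq:superlinear}.

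For the sufficiency direction I would fix a candidate arbitrage $\phi\in\mathcal A$ starting from $z=(c,0,\ldots,0)$ with $c<0$ and $X^i_T(z,\phi)\geq 0$ for all $i$, then pick the dual object $Z$ corresponding to $\chi:=-c/2$. Using the $Q$-martingale property of each $Z^i$ together with a stochastic Fubini argument applied to the absolutely continuous processes $t\mapsto\int_0^t\phi^i_u\,du$ and $t\mapsto\int_0^t\phi_u S_u\,du$, one obtains the identity
\[
E_Q\sum_{i=0}^d Z^i_T X^i_T(z,\phi)=c+E_Q\!\int_0^T\!\Bigl[\phi_u\bigl(Z_u-Z^0_u S_u\bigr)-Z^0_u G_u(\phi_u)\Bigr]du,
\]
where $Z_u:=(Z^1_u,\ldots,Z^d_u)$. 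On $\{Z^0_u>0\}$ the Fenchel inequality $\phi_u\cdot(\bar Z_u-S_u)\leq G_u(\phi_u)+G_u^*(\bar Z_u-S_u)$ bounds the bracketed integrand by $Z^0_u G^*_u(\bar Z_u-S_u)$, while on $\{Z^0_u=0\}$ the structural inclusion forces the integrand to vanish. Hence the right-hand side is at most $c+\chi=c/2<0$, whereas the left-hand side is non-negative because $Z^i_T,X^i_T\geq 0$, a contradiction.

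For the necessity direction I would take a Kreps--Yan / Hahn--Banach route. Let $\mathcal C$ denote the convex cone of terminal wealth vectors $X_T((0,\ldots,0),\phi)$ generated by $\phi\in\mathcal A$ whose total $G$-cost is finite; (NA2) is then exactly the statement that $\mathcal C\cap L^0(\PP;\mathbb R^{d+1}_+)=\{0\}$. I would first pass to an auxiliary probability $\tilde P\sim P$ under which $t\mapsto(1+|S_t|)^{\beta\alpha/(\alpha-\beta)}$ is integrable on $[0,T]\times\Omega$ (this standard truncation is available because $\mathcal F_0$ is trivial), then verify closedness of a suitable truncation $\mathcal C_N$ of $\mathcal C$ (strategies with cumulative $G$-cost bounded by $N$) in the weak topology of $L^\gamma(\tilde P;\mathbb R^{d+1})$, using \eqref{eq:superlinear} to extract a bound on $\phi$ in $L^\alpha(dt\otimes d\tilde P;\mathbb R^d)$. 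A Kreps--Yan theorem then produces a strictly positive linear functional on $L^\gamma$ which is non-positive on $\mathcal C_N$: its $0$-th coordinate provides the Radon--Nikodym derivative of $Q$ with respect to $\tilde P$, while the $\mathcal F_t$-conditional expectations of its remaining coordinates deliver the $Q$-martingales $Z^i_t$. A Halmos--Savage style exhaustion is then used to strengthen $Q\ll P$ to $Q\sim P$ and to tune the $G^*$-bound to any prescribed $\chi>0$.

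The main obstacle is the closedness step in the necessity direction: the cost is superlinear in the trading \emph{rate} $\phi$, so the natural ambient space for strategies is $L^\alpha(dt\otimes d\tilde P;\mathbb R^d)$, whereas separation acts on terminal wealth vectors in $L^\gamma$, and one must transfer closedness between these two settings via \eqref{eq:superlinear} together with uniform integrability of $\int_0^{\cdot}\phi_u S_u\,du$ (whence the choice $\beta\alpha/(\alpha-\beta)$). A secondary subtlety is the careful bookkeeping of the structural condition $\{Z^i_t=0,\,i=1,\ldots,d\}\subset\{Z^0_t=0\}$ and the normalization $Z^0_0=1$ throughout the extraction of $Z$ from the separating functional: these do not follow automatically from separation and must be enforced by a conglomeration/exhaustion argument over countably many dual objects.
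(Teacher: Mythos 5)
First, a point of reference: the paper does not prove Theorem \ref{char} at all. It is quoted (with the corrected hypotheses, see the footnote) from Theorem 4.2 of Guasoni and R\'asonyi \cite{gr}, and the $\Box$ marks it as imported; so your proposal can only be judged as a reconstruction of the external argument, not compared with an internal proof.

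On its merits, your sketch has genuine gaps in both directions. In the sufficiency direction, the step ``on $\{Z^0_u=0\}$ the structural inclusion forces the integrand to vanish'' misreads the hypothesis: the stated inclusion is $\{Z^i_t=0,\ i=1,\ldots,d\}\subset\{Z^0_t=0\}$, i.e.\ vanishing of \emph{all risky} components forces $Z^0_t=0$, not the converse. On $\{Z^0_u=0\}$ your bracketed integrand reduces to $\phi_u\cdot(Z^1_u,\ldots,Z^d_u)$, which is not controlled by $Z^0_u G^*_u(\bar Z_u-S_u)=0$, so the key estimate fails as written; handling this set (or showing one may work with $Z^0>0$) requires a separate argument. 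Moreover, the ``stochastic Fubini'' identity is not automatic for an arbitrary $\phi\in\mathcal{A}$: one must first justify integrability of $\int_0^T|\phi_u|\,|Z_u|\,du$ and $\int_0^T|\phi_u S_u|Z^0_u\,du$ under $Q$, which is exactly where the superlinearity \eqref{eq:superlinear}, the moment condition on $S$, $Z_T\in L^{\gamma}(Q)$ and a truncation/localization of $\phi$ must be combined; without this the expectation manipulation is unjustified. In the necessity direction your plan is only a program: beyond the closedness issue you acknowledge, a single Kreps--Yan separation (plus Halmos--Savage exhaustion) yields \emph{one} dual pair $(Q,Z)$ with the sign property, but it does not produce, for \emph{every} $\chi>0$, a pair satisfying the quantitative bound $E_Q\int_0^T Z^0_t G^*_t(\bar Z_t-S_t)\,dt<\chi$; this smallness of the dual gap is the substantive content of the characterization and needs a dedicated (e.g.\ superhedging-duality or optimization) argument, as do the normalization $Z^0_0=1$ and the structural inclusion, which you defer to an unspecified ``conglomeration'' step.
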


Theorem \ref{main1} ensures that a plethora of models satisfy (NA2).

\begin{proposition}\label{absence} Let $\mathcal{F}_0$ be trivial, let Assumption \ref{below} hold.
If $S$ is sticky then it satisfies (NA2).  
\end{proposition}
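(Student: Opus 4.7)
My plan is to verify the dual criterion of Theorem~\ref{char}. Given $\chi>0$, the natural candidate is $Z^0\equiv 1$ together with $Z^i:=\tilde S^i$, where $\tilde S$ is an approximating martingale obtained from Corollary~\ref{added}. With this choice $\bar Z=\tilde S$, the containment $\{Z^i_\cdot=0,\,i\ge 1\}\subset\{Z^0_\cdot=0\}$ is vacuous because $Z^0\equiv 1$, and $Z^0_0=1$. What remains is (i) the integrability $E_Q\int_0^T(1+|S_t|)^\mu\,dt<\infty$ for $\mu:=\beta\alpha/(\alpha-\beta)$, (ii) the integrability $Z_T\in L^\gamma(Q)$ for $\gamma:=\beta/(\beta-1)$, and (iii) the key bound $E_Q\int_0^T Z^0_t G^*_t(\bar Z_t-S_t)\,dt<\chi$.

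I would carry this out in two preparatory steps. First, since $L:=\sup_{t\in[0,T]}|S_t|$ is a.s.\ finite by c\`adl\`ag-ness, I choose an $\mathcal{F}_T$-measurable positive density $\xi$ decaying fast enough (for instance $\xi\propto(1+L)^{-N}$ for large $N$) and set $dQ_0/dP:=\xi/E\xi$, so that $S$ has moments of all orders under $Q_0$ and (i) holds under $Q_0$. Because $\xi$ is $\mathcal{F}_T$-measurable and $\mathcal{F}_T\perp\mathcal{G}_T$, Assumption~\ref{large} survives under $Q_0$ (the Brownian motion $B$ remains Brownian with independent filtration), and stickiness of $S$ is preserved under an equivalent change of measure. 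Second, I apply Corollary~\ref{added} under $Q_0$ with exponent $p:=\gamma$ to obtain $Q\sim Q_0\sim P$ and an $\mathbb{H}$-adapted $Q$-martingale $\tilde S$ with $E_Q\sup_{t\in[0,T]}|S_t-\tilde S_t|^\gamma$ arbitrarily small. To transfer (i) from $Q_0$ to $Q$ one verifies that the density $dQ/dQ_0=\prod_n Z_n$ from Lemma~\ref{infamous} lies in $L^r(Q_0)$ for some $r>1$, which one can read off the construction in Lemma~\ref{lema}; a H\"older inequality combined with the freedom to take $\xi$ decaying arbitrarily fast then yields (i).

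To close the argument I exploit the Legendre duality $G^*_t(y)\le C|y|^{\alpha/(\alpha-1)}$, which follows from Assumption~\ref{below} by a direct computation and provides a conjugate exponent $\alpha/(\alpha-1)<\gamma$ (since $\beta<\alpha$). Thus
\[
E_Q\!\int_0^T Z^0_t\, G^*_t(\bar Z_t-S_t)\,dt\le CT\,E_Q\sup_{t\in[0,T]}|\tilde S_t-S_t|^{\alpha/(\alpha-1)},
\]
which is smaller than $\chi$ for the approximation parameter chosen sufficiently small; and $\tilde S_T\in L^\gamma(Q)$ follows from $|\tilde S_T|\le|S_T|+|\tilde S_T-S_T|$ together with the integrability just arranged. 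The main obstacle I anticipate is the nonnegativity constraint $Z\in\mathbb{R}^{d+1}_+$, i.e.\ $\tilde S\ge 0$: the proof of Theorem~\ref{main1} sets $\tilde S_t=E_Q[Y_T\vert\mathcal{H}_t]$ with $Y=S+W$ and $W$ ranging in $(-\varepsilon,\varepsilon)^d$, so $\tilde S$ could be negative even for a nonnegative price process $S$. For the financial application where $S\ge 0$, this is repaired by replacing the bijection $\pi\colon\mathbb{R}\to(-\varepsilon,\varepsilon)$ inside $F$ by one onto $(0,\varepsilon)$, so that $W\ge 0$ and hence $Y,\tilde S\ge 0$; the rest of the proof of Theorem~\ref{main1} then carries over essentially unchanged.
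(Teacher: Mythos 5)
Your skeleton is the same as the paper's: verify the dual criterion of Theorem~\ref{char} with $Z^0\equiv 1$, $Z^i=\tilde S^i$ for an approximating martingale $\tilde S$, use $G_t^*(y)\le C|y|^{\alpha/(\alpha-1)}$, and make the friction term small via the approximation. The genuine gap is in how you secure the integrability requirements $E_Q\int_0^T(1+|S_t|)^{\beta\alpha/(\alpha-\beta)}dt<\infty$ and $Z_T\in L^\gamma(Q)$. Your plan is a preliminary change to $Q_0$ (density $\propto(1+\sup_t|S_t|)^{-N}$) giving $S$ all moments, then Corollary~\ref{added} under $Q_0$, then a transfer back to $Q$ via the claim that $dQ/dQ_0=\prod_n Z_n\in L^r(Q_0)$ for some $r>1$, ``read off'' Lemma~\ref{lema}. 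That claim is not supported by the construction: Lemma~\ref{baba}/\ref{lema} only produce densities $j(\cdot,\omega)\in C_+(\overline{\mathbb{R}^d})$ with unit conditional mass and the specific conditional estimates used in Lemma~\ref{infamous}; they come with no uniform pointwise bound (the bump $m$ needed to recentre the barycentre can be arbitrarily large), no bound uniform in $n$ or $\omega$, and hence no $L^r$, $r>1$, control on the infinite product. Taking $\xi$ to decay faster does not help, since the obstruction is the unboundedness of $dQ/dQ_0$, not a lack of moments of $S$ under $Q_0$. As written, both (i) and (ii) (you bound $|\tilde S_T|\le|S_T|+|\tilde S_T-S_T|$, so you also need $S_T\in L^\gamma(Q)$) rest on this unjustified step.

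The paper closes this hole by extracting the moments of $S$ under $Q$ \emph{from the construction itself}, with no preliminary measure change: apply Theorem~\ref{main1} with $g(x)=x^{\delta}$, $\delta:=\max\{\gamma,\beta\alpha/(\alpha-\beta)\}$. Tracing the proof, Lemma~\ref{infamous} is invoked with a weight $w(x)\ge |x|^{2\delta}$, so Remark~\ref{neige} (with $M_0$ constant, which is where the hypothesis that $\mathcal F_0$ is trivial enters) gives $E_Q\sup_n|M_n|^{2\delta}<\infty$; since the path of $Y=S+W$ stays within $\varepsilon$ of the sampled values and $|Y-S|<\varepsilon$, this yields $E_Q\,g(\sup_t|S_t|)<\infty$ and then $E_Q\,g(\sup_t|\tilde S_t|)<\infty$ directly under the very $Q$ produced by the construction, which is exactly (i) and (ii). If you want to keep your two-step structure you would have to redo the lemmas to obtain quantitative $L^r$-bounds on the densities, which is not available off the shelf. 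Finally, your remark about the $\mathbb{R}^{d+1}_+$-valuedness of $Z$ is a fair observation --- the paper's proof simply sets $Z^i=\tilde S^i$ without comment --- but note that your proposed repair (mapping $B$ into $(0,\varepsilon)^d$ so that $W\ge0$) only helps when $S\ge 0$, which is not among the hypotheses of the proposition, so it does not by itself settle that point.
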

\begin{proof} We enlarge the probability space so that Assumption \ref{large} holds.
Lemma 3.2 of Guasoni and R\'asonyi \cite{gr} shows that there is a constant $C$ such that, for all $t$,
$G_t^*(y)\leq C|y|^{\alpha/(\alpha-1)}$. Set $\delta:=\max\{\gamma,\beta\alpha/(\alpha-\beta)\}$, 
$g(x):=x^{\delta}$, 
$x\geq 0$ and notice that $\alpha/(\alpha-1)\leq \gamma$ hence, for any random variable $X$,
\begin{equation}\label{ex}
E_QG_t^*(X)\leq CE_Q|X|^{\alpha/(\alpha-1)}\leq CE_Q^{\alpha/[(\alpha-1)\delta]}[|X|^{\delta}]. 
\end{equation}

Fix $\chi>0$. Theorem \ref{main1}  
provides $Q\sim P$ and
a $Q$-martingale $\tilde{S}$ with respect to $\mathbb{H}$ such that 
\begin{equation}\label{tak}
E_Qg(\sup_{t\in [0,T]}|S_t-\tilde{S}_t|)<\chi. 
\end{equation}
A closer look at the details of those arguments
shows that Lemma \ref{infamous} is used with the choice
$w(x)=4^{2\delta}|x|^{2\delta}+2|x|$. Remark \ref{neige} then
yields
\[
E_Q\sup_{k\in\mathbb{N}}|M_{k}|^{2\delta}<\infty 
\]
hence also
\[
E_Q g\left(\sup_{k\in\mathbb{N}}|M_{k}|\right)<\infty. 
\]
We thus get
\begin{equation}
E_Qg\left(\sup_{t\in [0,T]}|Y_t|\right)\leq E_Qg\left(\sup_{k\in\mathbb{N}}|Y_{\tau_k}|+\varepsilon\right)=
E_Qg\left(\sup_{k\in\mathbb{N}}|M_k|+\varepsilon\right)<\infty, 
\end{equation}
noting convexity of $g$. Since $|Y_t-S_t|<\varepsilon$, this implies
\begin{equation}\label{majdd}
E_Qg\left(\sup_{t\in [0,T]}|S_t|\right)<\infty,
\end{equation}
and hence also
\begin{equation}\label{androttir}
E_Qg\left(\sup_{t\in [0,T]}|\tilde{S}_t|\right)<\infty, 
\end{equation}
by \eqref{tak}.
Noting $\beta\alpha/(\alpha-\beta)\leq \delta$ and \eqref{majdd},
$$
E_Q \int_0^T (1+|S_t|)^{\beta\alpha/(\alpha-\beta)}dt <\infty.
$$

Define the $Q$-martingale $Z^0_t:=1$, $Z^i_t:=\tilde{S}_t^i$, $i=1,\ldots,d$.
Clearly, $Z_T\in L^{\gamma}(Q)$ by \eqref{androttir} and $\delta\geq\gamma$. We deduce from \eqref{ex} and \eqref{tak} that 
\begin{eqnarray*}
& & E_Q\int_0^T Z^0_t G_t^*(\bar{Z}_t-S_t)dt=\int_0^T E_Q G_t^*(\tilde{S}_t-S_t)dt\\
&\leq& TC E_Q^{\alpha/[(\alpha-1)\delta]}\left[\sup_{t\in [0,T]}g(|\tilde{S}_t-S_t|)\right]
\leq TC \chi^{\alpha/[(\alpha-1)\delta]},
\end{eqnarray*}
which goes to $0$ as $\chi\to 0$. This implies (NA2) for the class $\mathcal{A}$ defined with $\mathbb{H}$-optional
processes. As $\mathbb{F}$ is a subfiltration of $\mathbb{H}$, the result
follows for $\mathcal{A}$ defined with $\mathbb{F}$-optional processes. This finishes the proof.
\end{proof}

\begin{remark} {\rm Property (NA2) was established in Guasoni and R\'asonyi \cite{gr} for the class of continuous processes
$S$ satisfying the CFS-$O$ property, see Remark \ref{utal}. Using arguments of Bender et al. \cite{bps}, one could
establish (NA2) for continuous sticky processes $S$. The essential novelty of Proposition
\ref{absence} thus lies in allowing jumps for $S$.}
\end{remark}
 
\section{Auxiliary results}\label{appendix}

For the proof of Theorem \ref{main1}, we need the two Lemmas presented below. We 
fix some notations first. Scalar products in $\mathbb{R}^d$ will be denoted by $\langle \cdot,\cdot\rangle$. 
$\overline{\mathbb{R}^d}$ denotes the one-point compactification of $\mathbb{R}^d$ and $C(\overline{\mathbb{R}^d})$
denotes the set of $\mathbb{R}$-valued continuous functions on $\overline{\mathbb{R}^d}$. We let
$C_+(\overline{\mathbb{R}^d}):=\{g\in C(\overline{\mathbb{R}^d}):\ g(x)>0,\ x\in\mathbb{R}^d\}$. We denote by
$C_0(\mathbb{R}^d)$ the family of continuous functions with compact support on $\mathbb{R}^d$. 
As $\overline{\mathbb{R}^d}$ is compact, $C(\overline{\mathbb{R}^d})$ (equipped with
the supremum norm) is a separable Banach space, a fortiori a Polish space. 
As $C_+(\overline{\mathbb{R}^d})$ is clearly
a Borel subspace of $C(\overline{\mathbb{R}^d})$, the measurable selection theorem (see e.g. 
III. 44-45. in \cite{dm}) applies to multifunctions with values in $C_+(\overline{\mathbb{R}^d})$.
Fix a continuous function $w:\mathbb{R}^d\to\mathbb{R}_+$ with $w(0)=0$.  

The next Lemma will provide a positive function $f$ that is used for a
measure change with density $f(Y)$ in the arguments of Lemma \ref{infamous}. The idea here is
that, due to \eqref{sarah} below (which will be a consequence of stickiness in our applications
of Lemma \ref{baba}), one can guarantee
that the ``mean'' $EYf(Y)$ is $0$ while the ``norm'' $Ef(Y)w(Y)$ stays small, together
with the ``mass'' $Ef(Y)1_{\{|Y|\geq \eta\}}$ allocated outside a small ball.

\begin{lemma}\label{baba}  Let $Y$ be an $\mathbb{R}^d$-valued random variable with $0\in\mathcal{S}(Y)$.  
Assume that 
\begin{equation}\label{sarah}
P(Y\in B(0,\epsilon))>0             
\end{equation}
for all $\epsilon>0$. Then for each $\eta>0$ there exists 
$f\in C_+(\overline{\mathbb{R}^d})$ such that $Ef(Y)=1$, $Ef(Y)w(Y)<\eta$, 
$Ef(Y)1_{\{|Y|\geq \eta\}}<\eta$, and $Ef(Y)Y=0$. 
\end{lemma}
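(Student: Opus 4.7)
The plan is to construct $f$ as a finite weighted sum of narrow bump functions, anchored at $0$ and at a few points of $\supp(Y)$ that together balance $0$, plus a tiny strictly positive correction $c'\psi$ whose only role is to enforce the pointwise positivity demanded by $C_+(\overline{\R^d})$. First dispose of the trivial case $\mathcal{D}(Y)=\{0\}$ (then $Y\equiv 0$ and any strictly positive constant works) and henceforth set $V:=\mathcal{D}(Y)$, a nontrivial linear subspace through $0$. Using $0\in\mathcal{S}(Y)$ and a standard consequence of Carath\'eodory's theorem, pick finitely many $v_1,\ldots,v_m\in\supp(Y)$ whose convex hull contains $0$ in its relative interior within $V$; set $v_0:=0$. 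Choose small parameters $\delta,\epsilon,c'>0$ (to be tuned) and continuous bumps $\phi_i\ge 0$ supported in $B(v_i,\delta)$ and strictly positive in its interior. Let
\[
\psi(y):=e^{-|y|^{2}-w(y)},
\]
which lies in $C_+(\overline{\R^d})$ (limit $0$ at infinity) and, because $te^{-t}$ is bounded, satisfies $E[\psi(Y)(1+|Y|+w(Y))]<\infty$. Set $f:=\sum_{i=0}^{m}c_{i}\phi_{i}+c'\psi$.

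The weights $p_i:=E\phi_i(Y)$ are strictly positive -- for $i=0$ by \eqref{sarah}, for $i\ge 1$ because $v_i\in\supp(Y)$ -- so with $\alpha_i:=c_ip_i$ and $m_i:=p_i^{-1}E[Y\phi_i(Y)]\in V$ (hence $|m_i-v_i|\le\delta$), the constraints $Ef(Y)=1$ and $EYf(Y)=0$ read
\[
\sum_{i=0}^{m}\alpha_i+c'q=1,\qquad \sum_{i=0}^{m}\alpha_i m_i+c'p=0,
\]
where $q:=E\psi(Y)>0$ and $p:=E[Y\psi(Y)]\in V$. Fixing $\alpha_0:=1-\epsilon-c'q$ reduces this to finding $\alpha_1,\ldots,\alpha_m>0$ with $\sum_{i\ge 1}\alpha_i=\epsilon$ and $\sum_{i\ge 1}\alpha_i m_i=-\alpha_0 m_0-c'p$. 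Since $m_i\to v_i$ as $\delta\to 0$, for $\delta$ small $0$ still lies in the relative interior of $\conv\{m_1,\ldots,m_m\}$ together with a ball in $V$ of some radius $\rho_0>0$ bounded away from $0$; the desired right-hand side divided by $\epsilon$ has norm $O((\delta+c')/\epsilon)$, which sits inside that ball once $\delta,c'$ are chosen small relative to $\epsilon$, and a strictly positive convex-combination representation then supplies the $\alpha_i$.

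Strict positivity of $f$ on $\R^d$ is immediate from $c'\psi>0$. If $\delta<\eta$ the bump $\phi_0$ contributes nothing to the tail, so $Ef(Y)1_{\{|Y|\ge\eta\}}\le\sum_{i\ge 1}\alpha_i+c'\le\epsilon+c'$. The $w$-moment satisfies
\[
Ef(Y)w(Y)\le\alpha_0\sup_{|y|\le\delta}w(y)+\epsilon\max_{i\ge 1}\sup_{B(v_i,\delta)}w+c'E[\psi(Y)w(Y)],
\]
whose first summand vanishes with $\delta$ by continuity of $w$ and $w(0)=0$, the second is at most $\epsilon$ times a constant depending only on the $v_i$, the third is $O(c')$. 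Choosing $\epsilon$ small first, then $c'$, then $\delta$ drives all bounds below $\eta$. The main obstacle is the moment-matching step: constructing a \emph{strictly positive} solution of the mass-and-mean system, and this is exactly where the hypothesis $0\in\mathcal{S}(Y)$ (as opposed to merely $0$ lying in the closed convex hull of $\supp Y$) is indispensable -- it yields a representation of $0$ that persists, with positive coefficients, under the perturbation $v_i\mapsto m_i$ induced by averaging against the bumps and by the $c'\psi$ correction.
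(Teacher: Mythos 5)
Your argument is correct, and it takes a genuinely different route from the paper's. The paper parametrizes over an infinite-dimensional family of candidate densities
\[
\mathcal{A}:=\{r\in C_+(\overline{\mathbb{R}^d}):\ Er(Y)\tilde w(Y)<\eta/2,\ Er(Y)<\eta\},\qquad \tilde w:=w+|\cdot|,
\]
considers the convex set $A:=\{Er(Y)Y:\ r\in\mathcal A\}\subset\mathcal{D}(Y)$, and shows $0\in\relint_{\mathcal D(Y)}(A)$ by a separating-hyperplane argument: any separating $l$ would give $\langle l,Y\rangle\geq0$ a.s., hence (using $0\in\mathcal{S}(Y)$, via Jacod--Shiryaev) $\langle l,Y\rangle=0$ a.s., contradicting $l\in\mathcal{D}(Y)\setminus\{0\}$. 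It then adds a compactly supported bump $m$ near $0$ with $Em(Y)=1$ and corrects the resulting mean $c=Em(Y)Y\in B(0,\delta)$ by the appropriate $r\in\mathcal A$ with $Er(Y)Y=-c$. You, instead, discretize at the outset: pick finitely many $v_1,\dots,v_m\in\supp(Y)$ with $0$ in the relative interior of their convex hull (Carath\'eodory/Steinitz), place bumps there and at $0$, add the globally positive weight $c'\psi$ with $\psi=e^{-|\cdot|^2-w}$, and then solve a finite linear system for the weights. The key step in your argument -- that the target $-(\alpha_0 m_0+c'p)/\epsilon$ lies in the relative interior of $\conv\{m_1,\dots,m_m\}$ with a strictly positive representation -- is essentially the finite-dimensional counterpart of the paper's $0\in\relint_{\mathcal D(Y)}(A)$, and it uses $0\in\mathcal{S}(Y)$ in exactly the same spirit but via elementary convex geometry rather than a Hahn--Banach separation in the space of densities. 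The trade-off: the paper's proof is shorter and dovetails naturally with the measurable-selection step needed for Lemma \ref{lema} (the selection acts on the same Polish space $C_+(\overline{\mathbb{R}^d})$), whereas yours is more constructive and makes the role of each hypothesis very concrete, at the cost of some perturbation bookkeeping (the $\delta,\epsilon,c'$ hierarchy and the stability of the inscribed ball under $v_i\mapsto m_i$, both of which are routine but worth a line or two if written up). A couple of small points to make explicit if you flesh this out: you should verify $\mathcal{D}(\{v_i\})=\mathcal{D}(Y)$ so that the perturbed points $m_i$ still affinely span $\mathcal{D}(Y)$ for small $\delta$; and you should check $\alpha_0=1-\epsilon-c'q>0$, which holds once $\epsilon,c'$ are small. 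These do not affect the validity of the approach.
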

\begin{proof} Define $\tilde{w}(x):=w(x)+|x|$ and note that it suffices to show the result for $\tilde{w}$
instead of $w$.
Since $\mathcal{S}(Y)\subset\mathcal{D}(Y)$, $\mathcal{D}(Y)$ is a nonempty linear subspace of $\mathbb{R}^d$. If $\mathcal{D}(Y)=\{ 0\}$
then we  set $f(y):=1$  for all $y\in\mathbb{R}^d$. From now on we assume that $\mathcal{D}(Y)$ has
dimension at least $1$. Define 
\[
\mathcal{A}:=\{r\in C_+(\overline{\mathbb{R}^d}):\ Er(Y)\tilde{w}(Y)<\eta/2,\ Er(Y)<\eta\}.
\] 
Now set $A:=\{Er(Y)Y:\ r\in\mathcal{A}\}$. Clearly, $A\subset\mathcal{D}(Y)$   is a convex and nonempty set. To see this observe that  for any $h\in C_0(\mathbb{R}^d)$, $h\geq 0$,
\begin{equation}\label{forma}
r(y)=\kappa_1 h(y)+\kappa_2 e^{-\tilde{w}(y)}
\end{equation}
lies in $\mathcal{A}$ for $\kappa_1,\kappa_2>0$ small enough.

Denoting by $ri_D(A)$ the interior of $A$ in the relative topology of $\mathcal{D}(Y)$ we claim
that $0\in ri_D(A)$. If this were not true then there would exist a non-zero $l\in\mathcal{D}(Y)$ such that
$\langle l,a\rangle\geq 0$ for all $a\in A$. This implies
\begin{equation}\label{mark}
E\langle l,Y\rangle r(Y)\geq 0 
\end{equation}
for all $r$ of the form \eqref{forma} (with $\kappa_1,\kappa_2$ small enough).  
We can let $\kappa_2 \rightarrow 0$ and obtain that (\ref{mark}) also holds for all 
$r(y)=\kappa_1 h(y)$ with $h\in C_0(\mathbb{R}^d)$, $h\geq 0$. This clearly implies  
that $\langle l,Y\rangle\geq 0$  a.s.  Therefore from $0\in\mathcal{S}(Y)$
we obtain that  $\langle l,Y\rangle=0$ a.s., by Theorem 3 in \cite{jacod-shiryaev} so
$\langle l,z\rangle=0$ for all $z\in\mathcal{D}(Y)$. 
But $\langle l, l\rangle >0$ and we arrive at a contradiction.

It follows that $B(0,\delta)\cap \mathcal{D}(Y)\subset A$ for some $0<\delta<\eta$.
We choose $\delta>0$  small enough such that $\sup_{|y|\leq \delta} \tilde{w}(y)\leq\eta/2$.
Let us now take $m\in C_0(\mathbb{R}^d)$ that is positive in the interior
of $B(0,\delta)$, vanishes elsewhere,  and satisfies $Em(Y)=1$.  Such a function
exists since $P(Y\in B(0,\delta/2))>0$.

Set $c:=Em(Y)Y\in B(0,\delta)\cap\mathcal{D}(Y)$. There exists $r\in\mathcal{A}$ with $Er(Y)Y=-c$,
so setting $f(y):=(r(y)+m(y))/E[r(Y)+m(Y)]$ we have $Ef(Y)Y=0$.  Obviously,
$Em(Y)\tilde{w}(Y)<\eta/2$ and $E[r(Y)+m(Y)]>1$ hence $Ef(Y)\tilde{w}(Y)<\eta$, using the definition of $\mathcal{A}$.
It remains to check that $$
Ef(Y)1_{\{|Y|\geq \eta\}}=Er(Y)1_{\{|Y|\geq \eta\}}/E[r(Y)+m(Y)]\leq Er(Y)1_{\{|Y|\geq \eta\}}<\eta,
$$
which follows from  $\delta<\eta$ and the definition of $\mathcal{A}$.
\end{proof}

Now consider a sigma-algebra $\mathcal{K}\subset\mathcal{F}$ and an $\mathbb{R}^d$-valued
random variable $X$. Let $Q:\mathcal{B}(\mathbb{R}^d)\times
\Omega\to [0,1]$ be the conditional law of $X$ with respect to  $\mathcal{K}$. We denote by $\delta_0$ the
Dirac measure at the origin. The following Lemma  is a  ``kernel version'' of Lemma \ref{baba}
above.

\begin{lemma}\label{lema} Let $0\in\mathcal{S}(Q(\cdot,\omega))$ for a.s. $\omega$ and let $Q(B(0,\epsilon),\omega)>0$ hold
 a.s. for each $\epsilon>0$. Then for each $\eta>0$,  there is a $\mathcal{B}(\mathbb{R}^d)\otimes\mathcal{K}$-measurable
 $j:\mathbb{R}^d\times\Omega\to (0,\infty)$ such that for almost  all $\omega$ the following holds:
 \begin{eqnarray*}
&&\int_{\mathbb{R}^d} j(z,\omega)Q(dz,\omega) =1,\\ 
&&\int_{\mathbb{R}^d} j(z,\omega)z Q(dz,\omega)= 0,\\
&&\int_{\mathbb{R}^d} j(z,\omega)w(z) Q(dz,\omega) < \eta, \\
&&\int_{\mathbb{R}^d} j(z,\omega)1_{\{|z|\geq \eta\}}Q(dz,\omega) < \eta.\\
\end{eqnarray*}
Furthermore, we may choose $j(z,\omega):=1$, $z\in\mathbb{R}^d$ on $\{\omega:Q(\omega,\cdot)=\delta_0(\cdot)\}$. 
\end{lemma}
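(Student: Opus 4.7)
The plan is to apply Lemma \ref{baba} $\omega$-by-$\omega$ to the conditional law $Q(\cdot,\omega)$ and to glue the resulting functions into a jointly measurable $j$ via the measurable selection theorem flagged at the start of this appendix. Concretely, I will consider the multifunction
\[
\Gamma(\omega):=\left\{f\in C_+(\overline{\mathbb{R}^d}):\ f\text{ satisfies the four integral conditions with respect to } Q(\cdot,\omega)\right\}.
\]
By the hypotheses on $Q(\cdot,\omega)$ combined with Lemma \ref{baba} (applied to any random variable distributed as $Q(\cdot,\omega)$), $\Gamma(\omega)$ is non-empty for $P$-a.e.\ $\omega$. On the measurable exceptional set $N:=\{\omega: Q(\{0\},\omega)=1\}=\{\omega: Q(\omega,\cdot)=\delta_0\}$ and on any remaining $P$-null set where the hypotheses happen to fail, I will simply set $j(z,\omega):=1$; on $N$ all four conditions are trivially met since $w(0)=0$ and $Q(\cdot,\omega)=\delta_0$ collapses each integral.

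On $\Omega\setminus N$, I apply the measurable selection theorem (III.44--45 in \cite{dm}) to pick a $\mathcal{K}$-measurable selector $\omega\mapsto f_\omega\in C_+(\overline{\mathbb{R}^d})$ of $\Gamma$, and then set $j(z,\omega):=f_\omega(z)$. Because the evaluation map $(f,z)\mapsto f(z)$ is continuous on the Polish space $C(\overline{\mathbb{R}^d})\times\overline{\mathbb{R}^d}$, the resulting $j$ is automatically $\mathcal{B}(\mathbb{R}^d)\otimes\mathcal{K}$-measurable, and its strict positivity on $\mathbb{R}^d$ is inherited from $C_+$. The four integral properties of $j(\cdot,\omega)$ follow from the very definition of $\Gamma(\omega)$.

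The main obstacle I expect is verifying that the graph of $\Gamma$ is measurable in $C_+(\overline{\mathbb{R}^d})\times\Omega$, which is what the selection theorem requires. The four constraints are encoded through the functionals
\[
\Psi_1(f,\omega)=\int f\,dQ(\cdot,\omega),\qquad \Psi_2(f,\omega)=\int zf(z)\,Q(dz,\omega),
\]
\[
\Psi_3(f,\omega)=\int f(z)w(z)\,Q(dz,\omega),\qquad \Psi_4(f,\omega)=\int f(z) 1_{\{|z|\geq \eta\}}\,Q(dz,\omega).
\]
For each fixed $f$, all four are $\mathcal{K}$-measurable in $\omega$ by the kernel property of $Q(\cdot,\cdot)$. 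For each fixed $\omega$, $\Psi_1$ and $\Psi_4$ are sup-norm continuous in $f$ because every $f\in C_+(\overline{\mathbb{R}^d})$ is uniformly bounded and $Q(\cdot,\omega)$ is finite, whereas $\Psi_2$ and $\Psi_3$ (whose integrands are not a priori bounded) will be written as the monotone limit of the truncated integrals $\int_{B(0,R)} zf\,dQ$ and $\int_{B(0,R)} fw\,dQ$, each of which is continuous in $f$ in sup norm. A standard Carath\'eodory-type argument, leveraging separability of $C(\overline{\mathbb{R}^d})$, then yields joint Borel measurability of each $\Psi_i$, so that the graph of $\Gamma$ is a countable intersection of measurable sets. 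Non-emptiness and measurability of the graph allow us to invoke the measurable selection theorem, producing the required $f_\omega$ and completing the construction of $j$.
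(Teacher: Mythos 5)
Your argument is correct and is essentially the paper's own proof: apply Lemma \ref{baba} $\omega$-by-$\omega$ to $Q(\cdot,\omega)$, invoke the Dellacherie--Meyer selection theorem (III.44--45) on the $C_+(\overline{\mathbb{R}^d})$-valued multifunction to get a $\mathcal{K}$-measurable $\omega\mapsto f_\omega$, and obtain joint measurability of $j(z,\omega)=f_\omega(z)$ from continuity of each $f_\omega$; you merely spell out the graph-measurability check that the paper leaves implicit, and handle the $\delta_0$-case by direct verification instead of referring back to Lemma \ref{baba}. One tiny repair: the truncated integrals approximating $\Psi_2$ do not converge \emph{monotonically} (the integrand $zf(z)$ is vector-valued), so first note that $\int |z|f(z)\,Q(dz,\omega)$ is a monotone limit, hence measurable, and on the set where it is finite $\Psi_2$ is the (measurable) pointwise limit of the truncations.
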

\begin{proof} For a.e. $\omega$, one can apply
Lemma \ref{baba} to a random variable $Y$ that has law $Q(\cdot,\omega)$ to get a function 
$f_{\omega}\in C_+(\overline{\mathbb{R}^d})$.
We may apply the measurable selection theorem on $(\Omega,\mathcal{K},P)$
to get a mapping $\omega\to f_{\omega}$ that is $\mathcal{K}/\mathcal{B}(C(\overline{\mathbb{R}^d}))$-measurable.
Let $j(z,\omega):=f_{\omega}(z)$ which is $\mathcal{B}(\mathbb{R}^d)\otimes\mathcal{K}$-measurable
(since each $f_{\omega}$ is continuous). This clearly satisfies the conclusions of the present lemma.
The last statement is clear from the proof of Lemma \ref{baba}.
\end{proof}

\begin{remark}\label{selection} {\rm The technology used in Lemmas
\ref{baba} and \ref{lema} was initiated in Dalang, et, al. \cite{dmw}. It has been further developed
by Y. Kabanov
in a continuous-time context and found several applications in mathematical finance. Here we only refer 
to Kabanov and Stricker \cite{ks_bounded} as a representative example.}
\end{remark}

Finally, we present a simple Lemma that was useful for the proof of Lemma \ref{sticky-lem} above.

\begin{lemma}\label{egyszeru}
Let $\mathcal{A},\mathcal{B}$ be sigma-fields and let $U,V$ be nonnegative random variables such that
$\mathcal{A}\vee\sigma(U)$ is independent of $\mathcal{B}\vee\sigma(V)$. Then
$$
E[UV\vert\mathcal{A}\vee\mathcal{B}]=E[U\vert\mathcal{A}]E[V\vert\mathcal{B}].
$$
\end{lemma}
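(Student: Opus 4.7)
The plan is to verify the characterizing property of conditional expectation directly. The right-hand side $E[U\vert\mathcal{A}]E[V\vert\mathcal{B}]$ is clearly $\mathcal{A}\vee\mathcal{B}$-measurable and nonnegative, so it suffices to show that for every $C\in\mathcal{A}\vee\mathcal{B}$,
$$
E\bigl[UV\,\mathbf{1}_C\bigr]=E\bigl[E[U\vert\mathcal{A}]E[V\vert\mathcal{B}]\,\mathbf{1}_C\bigr].
$$
By a standard monotone class / $\pi$-$\lambda$ argument, it is enough to check this identity on the $\pi$-system of rectangles $C=A\cap B$ with $A\in\mathcal{A}$, $B\in\mathcal{B}$, since such sets generate $\mathcal{A}\vee\mathcal{B}$ and are stable under finite intersections.

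Fix such $A$ and $B$. Because $\mathcal{A}\vee\sigma(U)$ is independent of $\mathcal{B}\vee\sigma(V)$, the random variables $U\mathbf{1}_A$ and $V\mathbf{1}_B$ are independent, so
$$
E\bigl[UV\,\mathbf{1}_A\mathbf{1}_B\bigr]=E\bigl[U\mathbf{1}_A\bigr]\,E\bigl[V\mathbf{1}_B\bigr].
$$
Using the defining property of conditional expectation,
$$
E\bigl[U\mathbf{1}_A\bigr]=E\bigl[E[U\vert\mathcal{A}]\,\mathbf{1}_A\bigr],\qquad
E\bigl[V\mathbf{1}_B\bigr]=E\bigl[E[V\vert\mathcal{B}]\,\mathbf{1}_B\bigr].
$$
Finally, the independence of $\mathcal{A}$ and $\mathcal{B}$ (which is implied by the hypothesis) together with the $\mathcal{A}$-measurability of $E[U\vert\mathcal{A}]\mathbf{1}_A$ and the $\mathcal{B}$-measurability of $E[V\vert\mathcal{B}]\mathbf{1}_B$ gives
$$
E\bigl[E[U\vert\mathcal{A}]\,\mathbf{1}_A\bigr]\,E\bigl[E[V\vert\mathcal{B}]\,\mathbf{1}_B\bigr]=E\bigl[E[U\vert\mathcal{A}]E[V\vert\mathcal{B}]\,\mathbf{1}_A\mathbf{1}_B\bigr],
$$
which is the required identity on the generating $\pi$-system.

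There is no real obstacle: the only mild subtlety is justifying the extension from rectangles to all of $\mathcal{A}\vee\mathcal{B}$, which is handled by the monotone class theorem applied to the class of sets $C$ for which the identity holds (a $\lambda$-system containing the $\pi$-system of rectangles). Nonnegativity of $U,V$ removes any integrability worry, as both sides are well-defined in $[0,\infty]$ and the monotone-class extension respects monotone limits.
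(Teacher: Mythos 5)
Your core computation is correct and is essentially the argument of the paper: both proofs verify the defining property of conditional expectation against the product $\pi$-system $\{A\cap B: A\in\mathcal{A},\,B\in\mathcal{B}\}$ and use the hypothesis twice, once to factor the expectation of the product and once (independence of $\mathcal{A}$ and $\mathcal{B}$) to reassemble the right-hand side. The only structural difference is the order of reductions: the paper first replaces $U,V$ by indicators $1_A$, $1_B$ with $A\in\sigma(U)$, $B\in\sigma(V)$ (functional monotone class) and then checks the resulting four-indicator identity, whereas you keep $U,V$ general and run the set-level monotone class argument over the test sets $C\in\mathcal{A}\vee\mathcal{B}$.

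That ordering is where your write-up has a small but real imprecision. When $E[UV]=\infty$ (the lemma only assumes nonnegativity, not integrability), the class of sets $C$ for which $E[UV1_C]=E[E[U\vert\mathcal{A}]E[V\vert\mathcal{B}]1_C]$ need not be a $\lambda$-system as you claim: the complement/proper-difference axiom requires subtracting the two sides, which may both be infinite, and the two measures $C\mapsto E[UV1_C]$ and $C\mapsto E[E[U\vert\mathcal{A}]E[V\vert\mathcal{B}]1_C]$ need not be $\sigma$-finite along the rectangle $\pi$-system, so the uniqueness-of-extension step is not automatic. Your closing remark that nonnegativity ``removes any integrability worry'' is backwards: nonnegativity is precisely what allows infinities. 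The fix is routine: prove the identity first for $U\wedge n$ and $V\wedge n$ (all measures finite, your $\pi$-$\lambda$ argument is then valid), and let $n\to\infty$ using monotone convergence on the left and conditional monotone convergence inside the product on the right. Equivalently, adopt the paper's ordering, which does the unbounded passage at the level of the a.s. identity via conditional monotone convergence and thereby avoids the issue altogether.
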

\begin{proof}
Let $A$ (resp. $B,C,D$) be $\sigma(U)$ (resp. $\sigma(V),\mathcal{A},\mathcal{B}$) measurable sets.
By the monotone class theorem and by the definition of conditional expectations, it suffices to prove that
\begin{equation}\label{lapp}
E[1_A1_B 1_C1_D]=E[E[1_A\vert \mathcal{A}]E[1_B\vert\mathcal{B}]1_C1_D].
\end{equation}
By independence and by the $\mathcal{A}$ (resp. $\mathcal{B}$) measurability of $C$ (resp. $D$), 
the right-hand side of \eqref{lapp} is
$$
E[E[1_A1_C\vert\mathcal{A}]]E[E[1_B1_D\vert\mathcal{B}]]=E[1_A1_C]E[1_B1_D],
$$
which equals the left-hand side of \eqref{lapp} by independence of $1_A1_C$ from $1_B1_D$.
\end{proof}

\end{document}